\documentclass[a4paper,11pt]{article}

\usepackage{amsmath}
\usepackage{amsfonts}
\usepackage{amssymb}
\usepackage{caption}
\usepackage{graphicx}
\usepackage{graphics}
\usepackage{color}
\usepackage[numbers]{natbib}
\usepackage{ushort}
\usepackage{makeidx}
\usepackage{float}
\usepackage{appendix}
\usepackage{framed}
\usepackage{versions}
\usepackage{xcolor}
\usepackage{mathbbol}
\usepackage{blindtext}
\usepackage{emptypage}
\usepackage{verbatim}
\usepackage{enumitem}
\usepackage{array}
\usepackage{multirow}

\def\hmath$#1${\texorpdfstring{{\rmfamily\textit{#1}}}{#1}}

\usepackage[bookmarks]{hyperref}
\hypersetup{
    colorlinks=true,   	
    linkcolor=red,      
    citecolor = [rgb]{0 0.7 0},   	
    filecolor=magenta, 	
    urlcolor=blue
}




\newcommand{\Xp}{\mbox{\boldmath $X$}}

\newcommand{\Xps}{\mbox{\scriptsize\boldmath $X$}}
\newcommand{\xp}{\mbox{\boldmath $x$}}
\newcommand{\xps}{\mbox{\scriptsize\boldmath $x$}}
\newcommand{\yps}{\mbox{\scriptsize\boldmath $y$}}

\newcommand{\Yp}{\mbox{\boldmath $Y$}}
\newcommand{\Yps}{\mbox{\scriptsize\boldmath $Y$}}
\newcommand{\yp}{\mbox{\boldmath $y$}}
\newcommand{\Zp}{\mbox{\boldmath $Z$}}
\newcommand{\Zps}{\mbox{\scriptsize\boldmath $Z$}}
\newcommand{\Zpss}{\mbox{\tiny\boldmath $Z$}}

\newcommand{\leqa}{\mbox{$ \;\stackrel{(a)}{\leq}\; $}}
\newcommand{\leqb}{\mbox{$ \;\stackrel{(b)}{\leq}\; $}}

\newcommand{\IN}{{\mathbb Z}}

\newcommand{\BBP}{{\mathbb P}}

\newcommand{\BBE}{{\mathbb E}}

\newcommand{\VAR}{\mbox{\rm Var}}

\def\ba{\begin{align}}
\def\ea{\end{align}}
\def\ban{\begin{align*}}
\def\ean{\end{align*}}

\def\be{\begin{eqnarray}}
\def\ee{\end{eqnarray}}
\def\ben{\begin{eqnarray*}}
\def\een{\end{eqnarray*}}

\def\bqq{\begin{equation}}
\def\eqq{\end{equation}}
\def\bqqn{\begin{equation*}}
\def\eqqn{\end{equation*}}






\def\elabel#1{\label{e:#1}}

%
 
\def\sq{$\Box$}

\def\qed{\ifmmode\sq\else{\unskip\nobreak\hfil
\penalty50\hskip1em\null\nobreak\hfil\sq
\parfillskip=0pt\finalhyphendemerits=0\endgraf}\fi\par\medbreak}


\newsavebox{\junk}
\savebox{\junk}[1.6mm]{\hbox{$|\!|\!|$}}

\def\limsup{\mathop{\rm lim\ sup}}
\def\liminf{\mathop{\rm lim\ inf}}













\def\til={{\widetilde =}}



\def\clR{{\cal R}}

\def\clX{{\cal X}}
\def\clY{{\cal Y}}
\def\clZ{{\cal Z}}

 \def\eq#1/{(\ref{#1})}



\newtheorem{theorem}{Theorem}[section]
\newtheorem{corollary}[theorem]{Corollary}

\newtheorem{lemma}[theorem]{Lemma}
\newtheorem{definition}[theorem]{Definition}

\def\eq#1/{(\ref{e:#1})}

\newcommand{\beqn}[1]{\notes{#1}%
\begin{eqnarray} \elabel{#1}}

\newcommand{\eeqn}{\end{eqnarray} } 

\newcommand{\beq}[1]{\notes{#1}%
\begin{equation}\elabel{#1}}

\newcommand{\eeq}{\end{equation}} 

\def\bdes{\begin{description}}
\def\edes{\end{description}}


\DeclareMathOperator*{\Bcup}{\text{\raisebox{0.25ex}%
	{\scalebox{0.8}{$\bigcup$}}}}


 

\def\notes#1{}




\definecolor{mag}{rgb}{0.7,0,0.3}
\definecolor{dgreen}{rgb}{0.1,0.5,0.1}
\definecolor{dred}{rgb}{.8,0,0}
\definecolor{gray}{rgb}{.8,.8,.8}
\definecolor{brown}{rgb}{0.6451,0.3706,0.1745}

\newcommand{\E}{\mathbb{E}}
\newcommand{\PP}{\mathbb{P}}
\newcommand{\Pbig}[1]{%
    \PP\bigl[ #1\bigr]%
}
\newcommand{\seq}[1]{%
    \{ #1\}%
}

\newcommand{\Ebig}[1]{%
    \E\bigl( #1\bigr)%
}
\newcommand{\EBig}[1]{%
    \E\Bigl( #1\Bigr)%
}

\newcommand{\PBig}[1]{%
    \PP\Bigl[ #1\Bigr]%
}

\newenvironment{proof}{\paragraph{Proof. }}{\hfill$\square$}

\setlength{\oddsidemargin}{0.25cm}
\setlength{\evensidemargin}{0.25cm}
\setlength{\topmargin}{-.25cm}  
\setlength{\textheight}{22.1cm}
\setlength{\textwidth}{16cm}

\begin{document}

\title{\vspace{-1.5cm}%
Sharp Second-Order Pointwise Asymptotics\\
for Lossless Compression with Side Information}

\author
{
	Lampros Gavalakis
    \thanks{Department of Engineering,
	University of Cambridge,
        Trumpington Street,
	Cambridge CB2 1PZ, U.K.
                Email: \texttt{\href{mailto:lg560@cam.ac.uk}%
			{lg560@cam.ac.uk}}.
	L.G.\ was supported in part by EPSRC grant number RG94782.
        }
\and
        Ioannis Kontoyiannis 
    \thanks{Department of Engineering,
	University of Cambridge,
        Trumpington Street,
	Cambridge CB2 1PZ, U.K.
                Email: \texttt{\href{mailto:i.kontoyiannis@eng.cam.ac.uk}%
			{i.kontoyiannis@eng.cam.ac.uk}}.
		Web: \texttt{\url{http://www.eng.cam.ac.uk/profiles/ik355}}.
	I.K.\ was supported in part by a grant from the Hellenic Foundation
	for Research and Innovation.
        }
}

\date{\today}

\maketitle

\setcounter{page}{0}
\thispagestyle{empty}

\begin{abstract}
The problem of determining the best achievable
performance of arbitrary lossless compression
algorithms is examined, when 
correlated side information is available at 
both the encoder and decoder.
For arbitrary source-side information pairs,
the conditional information density
is shown to provide a sharp asymptotic
lower bound for the description lengths 
achieved by an arbitrary sequence of compressors.
This implies that,
for ergodic source-side information pairs,
the conditional entropy rate is 
the best achievable asymptotic lower bound
to the rate,
not just in expectation but with probability
one. Under appropriate mixing conditions,
a central limit theorem and a
law of the iterated logarithm are proved,
describing the inevitable fluctuations 
of the second-order asymptotically best 
possible rate.
An idealised version of Lempel-Ziv coding with 
side information is shown to be
universally first- and second-order asymptotically 
optimal, under the same conditions.
These results are in part based on a new
almost-sure invariance principle for the
conditional information density, which may
be of independent interest.
\end{abstract}

\noindent
{\small
{\bf Keywords --- } 
Entropy, lossless data compression,
side information,
conditional entropy, central limit theorem,
law of the iterated logarithm,
conditional varentropy
}


\newpage

\setcounter{page}{1}

\section{Introduction}

It is well-known that the presence of correlated side 
information can potentially offer 
dramatic benefits for data compression
\cite{slepianwolf:73,cover:book2}.
Important applications where such side information 
is naturally present include
the compression of genomic data \cite{yang:01,
fritz:11},
file and software management \cite{rsync,suel:02},
and image and video compression
\cite{pradhan:01,aaron:02}.

In practice, the most common approach
to the design of effective 
compression methods with side
information is based on 
generalisations of the Lempel-Ziv
family of algorithms
\cite{subrahmanya:95,uyematsu:03,tock:05,jacob:08,jain-bansal}.
A different approach based on 
grammar-based codes was developed in~\cite{stites:00},
turbo codes were applied in~\cite{aaron:02b},
and a generalised version of context-tree weighting
was used in~\cite{C-K-Verdu:06}.

In this work we examine the theoretical fundamental 
limits of the best possible performance that can be 
achieved in such problems.
Let $(\Xp,\Yp)=\{(X_n,Y_n)\;;\;n\geq 1\}$ be a source-side
information pair; $\Xp$ is the source to be compressed,
and $\Yp$ is the associated side information process
which is assumed to be available both to the encoder
and the decoder.
Under appropriate conditions, the best {\em average}
rate that can be achieved asymptotically \cite{cover:book2},
is the {\em conditional entropy rate},
$$H(\Xp|\Yp)=\lim_{n\to\infty}\frac{1}{n}H(X_1^n|Y_1^n),\
\qquad\mbox{bits/symbol},$$ 
where 
$X_1^n=(X_1,X_2,\ldots,X_n)$,
$Y_1^n=(Y_1,Y_2,\ldots,Y_n)$,
and $H(X_1^n|Y_1^n)$ denotes the conditional entropy
of $X_1^n$ given $Y_1^n$;
precise definitions will be given in 
Section~\ref{pointwiseasymptotics}.

Our main goal is to derive sharp
asymptotic expressions for the optimum compression 
rate (with side information available
to both the encoder and decoder), 
not only in expectation but with probability~1.
In addition to 
the best first-order performance,
we also determine 
the best rate at which this performance
can be achieved, as a function of the length
of the data being compressed. Furthermore, we consider
an idealised version of a Lempel-Ziv 
compression algorithm,
and we show that it can achieve 
asymptotically optimal first- and second-order
performance, {\em universally} over a broad
class of stationary and ergodic source-side 
information pairs $(\Xp,\Yp)$.

Specifically, we establish the following.
In Section~\ref{descriptionofanoptimal} we
describe the theoretically optimal 
one-to-one compressor $f_n^*(X_1^n|Y_1^n)$, 
for arbitrary source-side information pairs
$(\Xp,\Yp)$.
In Section~\ref{s:stronga} we 
prove our first result, stating that 
the description lengths $\ell(f_n^*(X_1^n|Y_1^n))$
can be well-approximated, with probability
one, by the {\em conditional information
density}, $-\log P(X_1^n|Y_1^n)$.
Theorem~\ref{t:kieffer} 
states that,
for any jointly stationary and ergodic
source-side information pair $(\Xp,\Yp)$,
the best asymptotically achievable compression 
rate is $H(\Xp|\Yp)$ bits/symbol,
with probability~1.
This generalises Kieffer's corresponding result~\cite{kieffer:91} 
to the case of compression with side information.

Further, in Section~\ref{s:finer}
we show that
there is a sequence of random variables $\{Z_n\}$ such that
the description lengths $\ell(f_n(X_1^n|Y_1^n))$
of {\em any} sequence of compressors $\{f_n\}$ satisfy a
``one-sided'' central limit theorem (CLT): 
Eventually, with probability~1,
\be
\ell(f_n(X_1^n|Y_1^n))\geq nH(\Xp|\Yp)+\sqrt{n} Z_n
+o(\sqrt{n}),\qquad\mbox{bits},
\label{eq:ICLT}
\ee
where the $Z_n$ converge to a $N(0,\sigma^2(\Xp|\Yp))$
distribution, and the term $o(\sqrt{n})$ is 
negligible compared to $\sqrt{n}$. The lower bound~(\ref{eq:ICLT})
is established in Theorem~\ref{t:cltc} where it is also
shown that it is asymptotically achievable.
This means that the rate obtained by any sequence
of compressors has inevitable $O(\sqrt{n})$ fluctuations
around the conditional entropy rate,
and that
the size of these fluctuations is quantified
by the {\em conditional varentropy rate},
$$
\sigma^2(\Xp|\Yp)=\lim_{n\to\infty}\frac{1}{n}
\VAR\big(-\log P(X_1^n|Y_1^n)\big).$$
This generalises the {\em minimal coding variance}
of \cite{kontoyiannis-97}.
The bound~(\ref{eq:ICLT}) holds
for a broad class of source-side information pairs,
including all Markov chains with positive
transition probabilities. 
Under the same conditions, a corresponding 
``one-sided'' law of the iterated logarithm (LIL)
is established in Theorem~\ref{t:lilc}, 
which gives a precise description of the
inevitable almost-sure fluctuations 
above $H(\Xp|\Yp)$, for 
any sequence of compressors.

The proofs of all the results in
Sections~\ref{s:first} and~\ref{s:finer}
are based, in part, on analogous
asymptotics for
the conditional information density,
$-\log P(X_1^n|Y_1^n)$. These 
are established in Section~\ref{s:density},
where we state and prove
a corresponding CLT and an LIL for $-\log P(X_1^n|Y_1^n)$.
These results, in 
turn, follow from the almost sure
invariance principle for $-\log P(X_1^n|Y_1^n)$,
proved in Appendix~\ref{philippstoutproof}.
Theorem~\ref{philippstout},
which is of independent interest,
generalises the invariance
principle established for the (unconditional)
information density $-\log P(X_1^n)$ by 
Philipp and Stout~\cite{philipp-stout:book}.

Section~\ref{lempelzivsection} is devoted to 
{\em universal} compression.
We consider a simple, idealised version of
Lempel-Ziv coding with side information. As in the
case of Lempel-Ziv compression without side information
\cite{willems:1,wyner-ziv:1}, 
the performance of this scheme is 
determined by the asymptotics of a family
of {\em conditional recurrence times}, $\clR_n=\clR_n(\Xp|\Yp)$.
Under appropriate, general conditions on the
source-side information pair $(\Xp,\Yp)$,
in Theorem~\ref{strongapproxtheorem} we show
that the ideal description lengths,
$\log \clR_n$, can be well-approximated
by the conditional information
density $-\log P(X_1^n|Y_1^n)$.
Combining this with our earlier results
on the conditional information density,
in Corollary~\ref{strongapproxcorollary}
and 
Theorem~\ref{recurrenceinvariance}
we show that the compression rate
of this scheme converges to $H(\Xp|\Yp)$,
with probability~1,
and that it is universally
second-order optimal.
The results of this section generalise
the corresponding asymptotics without side
information established in
\cite{ornstein-weiss:2} and \cite{kontoyiannis-jtp}.

The proofs of the more technical results needed in
Sections~\ref{pointwiseasymptotics} 
and~\ref{lempelzivsection} are given in the appendix.

\section{Pointwise Asymptotics}
\label{pointwiseasymptotics}

In this section we derive general, 
fine asymptotic bounds for the description lengths
of arbitrary compressors with side 
information, as well as corresponding
achievability results.

\subsection{Preliminaries}
\label{descriptionofanoptimal}

Let $\Xp=\{X_n\;;\;n\geq 1\}$ be an arbitrary source
to be compressed,
and $\Yp=\{Y_n\;;\;n\geq 1\}$ be an associated 
side information process. We let 
$\clX,\clY$, denote their finite alphabets,
respectively, and we refer to the joint process
$(\Xp,\Yp)=\{(X_n,Y_n)\;;\;n\geq 1\}$ as a
{\em source-side information pair}.

Let $x_1^n=(x_1,x_2,\ldots,x_n)$ be a source string,
and let $y_1^n=(y_1,y_2,\ldots,y_n)$ an associated 
side information string which is available to both the encoder
and decoder.
A {\em fixed-to-variable one-to-one compressor 
with side information}, of blocklength $n$,
is a collection of functions $f_n$, where each
$f_n(x_1^n|y_1^n)$ takes a value in the set of all finite-length
binary strings,
$$\{0,1\}^*=\Bcup_{k=0}^\infty\{0,1\}^k=\{\emptyset,0,1,00,01,000,\ldots\},$$
with the convention that $\{0,1\}^0=\{\emptyset\}$ consists 
of just the empty string
$\emptyset$ of length zero. 
For each $y_1^n\in\clY^n$, we assume that
$f_n(\cdot|y_1^n)$ is a one-to-one function 
from $\clX^n$ to $\{0,1\}^*$,
so that the compressed binary string $f_n(x_1^n|y_1^n)$ 
is uniquely decodable. 

The main figure of merit in lossless compression
is of course the description length,
$$\ell(f_n(x_1^n|y_1^n))
=\mbox{length of}\;f_n(x_1^n|y_1^n),\qquad\mbox{bits},$$
where, throughout, $\ell(s)$ denotes the length, in bits, of 
a binary string $s$.
It is easy to see that, under quite general criteria,
the optimal compressor $f_n^*$ is easy
to describe;
see \cite{gavalakis-arxiv:19} for an extensive 
discussion.
For $1\leq i\leq j\leq\infty$, we use the shorthand 
notation $z_i^j$ for the string $(z_i,z_{i+1},\ldots,z_j)$,
and similarly $Z_i^j$ for the corresponding collection 
of random variables $Z_i^j=(Z_i,Z_{i+1},\ldots,Z_j)$.

\begin{definition}
[The optimal compressor \boldmath{$f_n^*$}]
For each side information string $y_1^n$,
$f_n^*(\cdot|y_1^n)$ is the optimal compressor for 
the distribution $\BBP(X_1^n=\cdot|Y_1^n=y_1^n)$, namely, 
the compressor that orders the strings $x_1^n$
in order of decreasing probability $\BBP(X_1^n=x_1^n|Y_1^n=y_1^n)$, 
and assigns them codewords from $\{0,1\}^*$ in lexicographic order.
\end{definition}

\subsection{The conditional information density}
\label{s:stronga}

\begin{definition}[Conditional information density]
For an arbitrary source-side information pair $(\Xp,\Yp)$, the
{\em conditional information density} of blocklength~$n$
is the random variable: 
$-\log P(X_1^n|Y_1^n)=-\log P_{X_1^n|Y_1^n}(X_1^n|Y_1^n)$.
\end{definition}

\noindent
[Throughout the paper, `$\log$' denotes `$\log_2$', the logarithm taken
to base~2, and all familiar information theoretic quantities
are expressed in bits.]

The starting point
is the following almost sure (a.s.) approximation result
between the description lengths
$\ell(f_n(X_1^n|Y_1^n))$ of an arbitrary 
sequence of compressors and the conditional
information density
$-\log{P(X_1^n|Y_1^n))}$ of an arbitrary 
source-side information pair $(\Xp,\Yp)$.
When it causes no confusion, we drop the 
subscripts for PMFs and conditional PMFs,
e.g., simply writing $P(x_1^n|y_1^n)$ for 
$P_{X_1^n|Y_1^n}(x_1^n|y_1^n)$ as in the definition
above.
Recall the definition of the optimal compressors
$\{f^*_n\}$ from Section~\ref{descriptionofanoptimal}.

\begin{theorem} 
For any source-side information pair 
\label{normpointred}
$(\Xp, \Yp)$,
and any sequence $\{B_n\}$
that grows faster than logarithmically,
i.e., such that $B_n/\log n\to \infty$
as $n\to\infty$, we have:
\begin{enumerate}
\item[$(a)$] For any sequence of compressors
with side information
$\seq{f_n}$:
\ben
\liminf_{n\rightarrow \infty}
{\frac{\ell(f_n(X_1^n|Y_1^n))-[-\log{P(X_1^n|Y_1^n)}]}{B_n}} 
\geq 0, \qquad \mbox{a.s.}
\een
\item[$(b)$]
The optimal compressors $\seq{f^*_n}$ achieve
the above bound with equality.
\end{enumerate}
\end{theorem}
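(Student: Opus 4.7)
The plan is to prove part $(a)$ via a conditional union bound followed by a Borel-Cantelli argument, and to derive part $(b)$ from a pointwise upper bound on the description length of the optimal code $f_n^*$.

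For the converse bound $(a)$, I control, conditional on $Y_1^n = y_1^n$, the probability of the event $A_n^\epsilon(y_1^n) = \{x_1^n : \ell(f_n(x_1^n|y_1^n)) \leq -\log P(x_1^n|y_1^n) - \epsilon B_n\}$, for each fixed $\epsilon > 0$. For $x_1^n \in A_n^\epsilon(y_1^n)$ with $\ell(f_n(x_1^n|y_1^n)) = k$, one has $P(x_1^n|y_1^n) \leq 2^{-k - \epsilon B_n}$, and since $f_n(\cdot|y_1^n)$ is one-to-one, the number of source strings mapped to codewords of length exactly $k$ is at most $\min(2^k, |\clX|^n)$. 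Splitting the resulting sum at $k_0 = n\log|\clX|$ yields, for some constant $C = C(|\clX|)$,
\begin{align*}
\PP\bigl(X_1^n \in A_n^\epsilon(y_1^n) \bigm| Y_1^n = y_1^n\bigr)
\;\leq\; \sum_{k \geq 0} \min(2^k, |\clX|^n)\, 2^{-k - \epsilon B_n}
\;\leq\; C n\, 2^{-\epsilon B_n},
\end{align*}
and taking expectation over $Y_1^n$ gives the same bound unconditionally. Since $B_n/\log n \to \infty$, eventually $\epsilon B_n \geq 3 \log n$ and the sequence is summable, so Borel-Cantelli gives $\ell(f_n(X_1^n|Y_1^n)) > -\log P(X_1^n|Y_1^n) - \epsilon B_n$ for all large $n$, almost surely. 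Applying this for each $\epsilon$ in a countable sequence tending to zero gives the desired $\liminf \geq 0$.

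For part $(b)$, the optimal compressor $f_n^*(\cdot|y_1^n)$ assigns the source strings, sorted in decreasing order of conditional probability, to the elements of $\{0,1\}^*$ in lexicographic order; the $i$-th codeword has length $\lfloor \log_2 i \rfloor$. If $x_1^n$ is the $i$-th most probable under $P(\cdot|y_1^n)$, then the top $i$ conditional probabilities sum to at most one and each is at least $P(x_1^n|y_1^n)$, so $i \cdot P(x_1^n|y_1^n) \leq 1$ and hence $\ell(f_n^*(x_1^n|y_1^n)) \leq \log i \leq -\log P(x_1^n|y_1^n)$ \emph{pointwise}. The ratio in the theorem is therefore nonpositive for every $n$, and combining with part $(a)$ forces its liminf to equal zero almost surely.

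The main obstacle lies in part $(a)$: a naive Kraft-type argument fails, because one-to-one codes need not satisfy the Kraft inequality, and indeed $\sum_{s \in \{0,1\}^*} 2^{-\ell(s)} = \infty$. One must instead exploit the finiteness of the source alphabet through the $\min(2^k, |\clX|^n)$ count on the number of codewords at each length, which introduces an extra $O(n)$ factor in the tail bound; the hypothesis $B_n/\log n \to \infty$ is precisely strong enough to absorb this factor and yield a summable sequence for Borel-Cantelli.
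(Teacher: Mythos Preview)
Your proof is correct and follows essentially the same approach as the paper's: the paper invokes a non-asymptotic converse (their Theorem~3.3 in \cite{gavalakis-arxiv:19}) giving $\PP[\ell(f_n(X_1^n|Y_1^n)) \leq -\log P(X_1^n|Y_1^n) - \tau] \leq n\,2^{-\tau}(\lfloor\log|\clX|\rfloor+1)$ and then applies Borel--Cantelli, while you reprove that very bound via the same counting argument (splitting codewords by length and using $\min(2^k,|\clX|^n)$); likewise, the paper cites an external achievability result for the pointwise inequality $\ell(f_n^*(X_1^n|Y_1^n)) \leq -\log P(X_1^n|Y_1^n)$, which you establish directly from the ranking construction. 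Your write-up is thus a self-contained version of the paper's proof rather than a different route.
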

\begin{proof}
Fix $\epsilon > 0$ arbitrary and 
let $\tau = \tau_n = \epsilon B_n$. Applying 
the general converse in 
\cite[Theorem~3.3]{gavalakis-arxiv:19}
with $X_1^n,Y_1^n$ in place of $X,Y$ and $\mathcal{X}^n,\mathcal{Y}^n$ 
in place of $\mathcal{X},\mathcal{Y}$, gives,
$$
\Pbig{\ell(f(X_1^n|Y_1^n)) \leq -\log{P(X_1^n|Y_1^n)} - \epsilon B_n} 
\leq 2^{\log{n} - \epsilon B_n}(\lfloor\log{|\mathcal{X}|}\rfloor + 1),
$$
which is summable in $n$. Therefore, by the Borel-Cantelli
lemma we have that, eventually, almost surely,
$$
\ell(f(X_1^n|Y_1^n)) + \log{P(X_1^n|Y_1^n)} > - \epsilon B_n.
$$
Since $\epsilon>0$ was arbitrary, this implies~$(a)$.
Part~$(b)$ follows from~$(a)$ together with the fact
that, with probability~1,
$\ell(f^*_n(X_1^n|Y_1^n))+\log{P(X_1^n|Y_1^n)} \leq 0$,
by the general achievability 
result in \cite[Theorem~3.1]{gavalakis-arxiv:19}.
\end{proof}

\subsection{First-order asymptotics}
\label{s:first}

For any source-side information pair $(\Xp,\Yp)$, the 
{\em conditional entropy rate} $H(\Xp|\Yp)$ is defined
as:
$$
H(\Xp|\Yp)=\limsup_{n\to\infty}\frac{1}{n}H(X_1^n|Y_1^n).$$
Throughout $H(Z)$ and $H(Z|W)$ denote the discrete entropy
of $Z$ and the conditional entropy of $Z$ given $W$,
in bits.
If $(\Xp,\Yp)$ are jointly stationary, then the above 
$\limsup$ is in fact a limit,
and it is equal to $H(\Xp,\Yp)-H(\Yp)$,
where $H(\Xp,\Yp)$ and $H(\Yp)$
are the entropy rates of $(\Xp,\Yp)$ and
of $\Yp$, respectively \cite{cover:book2}.
Moreover, if $(\Xp,\Yp)$
are also jointly ergodic, then by applying
the Shannon-McMillan-Breiman theorem
\cite{cover:book2} to $\Yp$ and to the pair $(\Xp,\Yp)$,
we obtain its conditional version: 
\begin{equation} \label{conditionalmcmillan}
-\frac{1}{n}\log{P(X_1^n|Y_1^n)} \rightarrow H(\Xp|\Yp),
\qquad \mbox{a.s.}
\end{equation}
The next result states that the conditional entropy rate
is the best asymptotically achievable compression rate,
not only in expectation but also with probability~1.
It is a consequence of Theorem~\ref{normpointred}
with $B_n=n$, combined with~(\ref{conditionalmcmillan}).

\begin{theorem}
\label{t:kieffer}
Suppose $(\Xp,\Yp)$ is a jointly stationary and ergodic
source-side information pair with conditional 
entropy rate $H(\Xp|\Yp)$.
\begin{enumerate}
\item[$(a)$] For any sequence 
of compressors with side information
$\seq{f_n}$:
\ben
\liminf_{n\rightarrow \infty}{\frac{\ell(f_n(X_1^n|Y_1^n))}{n}} 
\geq H(\Xp|\Yp), \qquad \mbox{a.s.}
\een
\item[$(b)$] The optimal compressors $\seq{f^*_n}$ achieve
the above bound with equality.
\end{enumerate}
\end{theorem}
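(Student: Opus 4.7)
The plan is to deduce \Theorem{kieffer} as a direct consequence of \Theorem{normpointred}, applied with the choice $B_n = n$ (which trivially satisfies $B_n/\log n \to \infty$), together with the conditional Shannon--McMillan--Breiman theorem~(\ref{conditionalmcmillan}) recalled just above, which supplies the a.s.\ limit $-\frac{1}{n}\log P(X_1^n|Y_1^n) \to H(\Xp|\Yp)$.

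For part~$(a)$, applying \Theorem{normpointred}$(a)$ with $B_n = n$ to an arbitrary sequence of compressors with side information $\seq{f_n}$ yields
$$
\liminf_{n\to\infty} \frac{\ell(f_n(X_1^n|Y_1^n)) + \log P(X_1^n|Y_1^n)}{n} \geq 0, \qquad \mbox{a.s.}
$$
Using the elementary inequality $\liminf(a_n + b_n) \geq \liminf a_n + \lim b_n$ (valid whenever the second limit exists) with $a_n = \ell(f_n(X_1^n|Y_1^n))/n$ and $b_n = \log P(X_1^n|Y_1^n)/n$, together with~(\ref{conditionalmcmillan}), immediately gives $\liminf_n \ell(f_n(X_1^n|Y_1^n))/n \geq H(\Xp|\Yp)$ a.s., which is exactly~$(a)$.

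For part~$(b)$, the lower bound from~$(a)$ applies to the optimal compressors $\seq{f_n^*}$ as a particular case, so it remains only to match it with an upper bound. This is obtained from the same a.s.\ inequality used in the proof of \Theorem{normpointred}$(b)$, namely $\ell(f_n^*(X_1^n|Y_1^n)) + \log P(X_1^n|Y_1^n) \leq 0$ a.s.\ (from the pointwise achievability result of \cite[Theorem~3.1]{gavalakis-arxiv:19}). Dividing by $n$ and invoking~(\ref{conditionalmcmillan}) once more gives $\limsup_n \ell(f_n^*(X_1^n|Y_1^n))/n \leq H(\Xp|\Yp)$ a.s., and combined with the lower bound from~$(a)$ this produces the desired equality.

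Essentially no new work is required at this stage: all the substance has already been absorbed into \Theorem{normpointred} (whose proof combines Borel--Cantelli with the pointwise converse of~\cite{gavalakis-arxiv:19}) and into the conditional SMB theorem~(\ref{conditionalmcmillan}), itself obtained by applying the standard Shannon--McMillan--Breiman theorem separately to $\Yp$ and to the joint process $(\Xp,\Yp)$. The only care required is routine $\liminf/\limsup$ bookkeeping and the observation that $\seq{f_n^*}$ is itself a valid sequence of compressors, so that part~$(a)$ applies to it automatically; there is no genuine obstacle to overcome.
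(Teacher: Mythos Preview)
Your proposal is correct and follows exactly the approach indicated in the paper: the paper simply states that the theorem ``is a consequence of Theorem~\ref{normpointred} with $B_n=n$, combined with~(\ref{conditionalmcmillan}),'' and your write-up merely unpacks this one-line argument with the routine $\liminf/\limsup$ manipulations and the same achievability bound $\ell(f_n^*(X_1^n|Y_1^n))\leq -\log P(X_1^n|Y_1^n)$.
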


\subsection{Finer asymptotics}
\label{s:finer}

The refinements of Theorem~\ref{t:kieffer} presented in this
section will be derived as consequences of the general
approximation results in Theorem~\ref{normpointred},
combined with corresponding refined asymptotics
for the conditional information density $-\log P(X_1^n|Y_1^n)$.
For clarity of exposition these are stated separately, in
Section~\ref{s:density} below.

The results of this section will be established for a
class of jointly stationary and ergodic source-side information 
pairs $(\Xp,\Yp)$, that includes all Markov chains with positive
transition probabilities. The relevant conditions, 
in their most general form, will be given in terms 
of the following mixing coefficients.

\begin{definition}
\label{d:mix}
Suppose $\Zp=\{Z_n\;;\;n\in\IN\}$ is a stationary process on a finite 
alphabet $\clZ$. For $-\infty \leq i \leq j \leq \infty$, 
let $\mathcal{F}_{i}^{j}$ 
denote the $\sigma$-algebra generated by $Z_{i}^{j}$.  
For $d \geq 1$, define:
\begin{align*}
\alpha^{(\Zps)}(d) &= 
\sup\bigl\{|\PP(A \cap B) - \PP(A)\PP(B)|\;;\;
A\in \mathcal{F}_{-\infty}^{0}, B \in \mathcal{F}_{d}^{\infty}\bigr\},\\
\gamma^{(\Zps)}(d) &= 
\max_{z \in \clZ}
{\Ebig{ \bigm|\log{\PP(Z_{0} = z|Z_{-\infty}^{-1})} 
- \log{\PP(Z_{0} = z|Z_{-d}^{-1})}\bigm|} }.
\end{align*}
\end{definition}
Note that if $\Zp$ is an ergodic Markov 
chain of order $k$, then
$\alpha^{(\Zps)}(d)$ decays exponentially 
fast \cite{bradley:86},
and $\gamma^{(\Zps)}(d)=0$ for all $d\geq k$.
Moreover, if $(\Xp,\Yp)$ is a Markov chain
with all positive transition probabilities,
then $\gamma^{(\Yps)}(d)$ also decays
exponentially fast; cf.~\cite[Lemma~2.1]{han:11}.

Throughout this section we will assume that 
the following conditions hold:

\medskip

\noindent
{\bf Assumption~(M). }
The source-side information pair $(\Xp,\Yp)$ 
is stationary and satisfies one
of the following three conditions:
\begin{itemize}
\item[$(a)$] $(\Xp,\Yp)$ is a Markov chain with all positive transition
probabilities; or
\item[$(b)$] $(\Xp,\Yp)$ as well as $\Yp$ are $k$th order,
irreducible and aperiodic Markov chains; or
\item[$(c)$]
$(\Xp,\Yp)$ is jointly ergodic and
satisfies the following mixing conditions:\footnote{Our source-side 
	information pairs $(\Xp,\Yp)$ are
	only defined for $(X_n,Y_n)$ with $n\geq 1$,
	whereas the coefficients $\alpha^{(\Zpss)}(d)$
	and $\gamma^{(\Zpss)}(d)$ are defined for two-sided
	sequences $\{Z_n\;;\;n\in\IN\}$. But this 
	does not impose an additional restriction, since
	any one-sided stationary process can be extended
	to a two-sided one by the Kolmogorov 
	extension theorem~\cite{bl:pm}.}%
\be
\alpha^{(\Xps,\Yps)}(d) = O(d^{-336}),\qquad
\gamma^{(\Xps,\Yps)}(d) = O(d^{-48}),\qquad\mbox{and}\qquad
\gamma^{(\Yps)}(d) = O(d^{-48}).
	\label{eq:mixing}
\ee
\end{itemize}

In view of the discussion following Definition~\ref{d:mix},
$(a)\Rightarrow(c)$ and $(b)\Rightarrow(c)$. Therefore, all results
stated under assumption~(M) will be proved under the 
weakest set of conditions, namely, 
that~(\ref{eq:mixing}) hold.

\begin{definition}\label{asymptoticvariance}
For a source-side information pair $(\Xp,\Yp)$, 
the {\em conditional varentropy rate} is:
\begin{equation}
\label{variancedef}
\sigma^2(\Xp|\Yp) 
= \limsup_{n \rightarrow \infty}{\frac{1}{n}
\VAR{\left(-\log{P(X_1^n|Y_1^n)}\right)}}.
\end{equation}
\end{definition}


\noindent
Under the above assumptions, the $\limsup$ in~(\ref{variancedef})
is in fact a limit. Lemma~\ref{lem:varentropy} is proved in the 
Appendix.

\begin{lemma}
\label{lem:varentropy}
Under assumption~{\em (M)},
the conditional varentropy rate $\sigma^2(\Xp|\Yp)$ is:
\begin{equation*}
\sigma^2(\Xp|\Yp) = \lim_{n \rightarrow \infty}{\frac{1}{n}
\VAR{\left(-\log{P(X_1^n|Y_1^n)}\right)}} = 
\lim_{n \rightarrow \infty}{\frac{1}{n}
\VAR{\Biggl(-\log\Big(
\frac{P(X_1^n,Y_1^n|X_{-\infty}^0,Y_{-\infty}^0)}
{P(Y_1^n|Y_{-\infty}^0)}\Big)
\Biggr)}}.
\end{equation*}
\end{lemma}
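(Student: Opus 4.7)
The plan is to reduce the computation of the conditional varentropy rate to the variance of a sum of stationary random variables with summable covariances. Extending $(\Xp,\Yp)$ to a two-sided stationary process via the Kolmogorov extension theorem, I would define the stationary sequence
\begin{equation*}
V_i = -\log P(X_i, Y_i \mid X_{-\infty}^{i-1}, Y_{-\infty}^{i-1}) + \log P(Y_i \mid Y_{-\infty}^{i-1}),
\end{equation*}
so that $\sum_{i=1}^n V_i$ is exactly the random variable whose variance appears in the second expression of the lemma. By the chain rule, $-\log P(X_1^n|Y_1^n) = \sum_{i=1}^n U_i$, with $U_i = -\log P(X_i,Y_i|X_1^{i-1},Y_1^{i-1}) + \log P(Y_i|Y_1^{i-1})$.

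The first step is to show that $\frac{1}{n}\VAR(\sum_{i=1}^n V_i)$ converges to $\sigma_V^2 \eqdef \VAR(V_0) + 2\sum_{k=1}^\infty \COV(V_0, V_k)$. By stationarity, the variance equals $n\VAR(V_0) + 2\sum_{k=1}^{n-1}(n-k)\COV(V_0, V_k)$, so convergence reduces to the absolute summability of the covariance series. That summability is obtained by applying Ibragimov's covariance inequality to the strong-mixing bound $\alpha^{(\Xps,\Yps)}(d) = O(d^{-336})$ in~(\ref{eq:mixing}), combined with a uniform $L^p$-moment bound on $V_i$ (with $p$ sufficiently large in terms of the mixing exponents). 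The required $L^p$-moments are finite because $V_i$ is the $L^p$-limit of its finite-past truncations by Doob's martingale convergence theorem, and finiteness of such truncations is immediate from the finiteness of the alphabets.

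The second step is the comparison between $\sum U_i$ and $\sum V_i$. Using stationarity and the definition of the $\gamma$-coefficients (summed over the finite alphabets $\clX\times\clY$ and $\clY$),
\begin{equation*}
\E|U_i - V_i| \leq C_1\, \gamma^{(\Xps,\Yps)}(i-1) + C_2\, \gamma^{(\Yps)}(i-1) = O(i^{-48}).
\end{equation*}
Promoting this to an $L^2$-bound of the same polynomial order and applying Minkowski's inequality yields $\|\sum_{i=1}^n(U_i-V_i)\|_{L^2} = O(1)$, hence
\begin{equation*}
\frac{1}{n}\Bigl|\VAR\Big(\sum_{i=1}^n U_i\Big) - \VAR\Big(\sum_{i=1}^n V_i\Big)\Bigr| \to 0.
\end{equation*}
This simultaneously upgrades the $\limsup$ in Definition~\ref{asymptoticvariance} to a limit, and identifies both expressions in the lemma with $\sigma_V^2$.

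The main technical obstacle is the $L^2$-upgrade of the $L^1$ $\gamma$-bound: the conditional log-probabilities can be unbounded when transition probabilities are not uniformly bounded away from zero, which is only guaranteed under~(M)(a). Under the general mixing assumption~(M)(c), one needs a uniform integrability argument that extracts $L^2$-decay of $U_i - V_i$ from the $L^1$-information encoded by the $\gamma$-coefficients, most naturally by combining the martingale convergence of conditional log-probabilities with the finiteness of the alphabets to obtain uniform moment control.
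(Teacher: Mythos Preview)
Your outline is the same two-step strategy the paper uses: define the stationary increments $V_i$ (the paper's $-f_i$), show $\frac{1}{n}\VAR(\sum_{i\le n} V_i)$ converges via summable covariances, and then compare $\sum U_i$ with $\sum V_i$ in $L^2$. Two related technical points remain unresolved, and both are handled in the paper by the same ingredient.

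First, in your Step~1 you cannot apply Ibragimov's $\alpha$-mixing covariance inequality directly to $\COV(V_0,V_k)$: $V_k$ is $\sigma(X_{-\infty}^k,Y_{-\infty}^k)$-measurable, not $\mathcal{F}_d^\infty$-measurable for any $d>0$. The paper (Lemma~\ref{sigmawellfinite}) introduces a truncation $\phi_k$ depending only on $(X,Y)_{\lfloor k/2\rfloor}^k$, splits $\COV(V_0,V_k)=\COV(V_0,V_k-\phi_k)+\COV(V_0,\phi_k)$, bounds the second piece by the mixing inequality, and the first by Cauchy--Schwarz together with an $L^2$ bound on $V_k-\phi_k$. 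Second, the $L^2$ upgrade you flag as the main obstacle is exactly this same $L^2$ bound. The paper invokes a result of Philipp and Stout (recorded as part~\ref{9.2} of Lemma~\ref{factsfromproof}, then Lemma~\ref{lemma2}): for any $r\ge 2$ and $\epsilon>0$,
\[
\bigl\| \log P(Z_0|Z_{-\infty}^{-1}) - \log P(Z_0|Z_{-k}^{-1})\bigr\|_r
\;\le\; C(r,\epsilon)\,\gamma^{(\Zps)}(k)^{(1-2\epsilon)/(2r)}.
\]
With $r=2$ this gives $\|U_i-V_i\|_2=O(i^{-12+24\epsilon})$, which is summable (so the $L^2$ decay is \emph{not} of the same polynomial order as the $L^1$ bound $O(i^{-48})$, but still more than enough), and the identical estimate controls $\|V_k-\phi_k\|_2$. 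This Philipp--Stout lemma is precisely where the finite-alphabet moment control you allude to is used; your martingale/uniform-integrability sketch is in the right spirit, but the concrete exponent $(1-2\epsilon)/(2r)$ is what makes both steps close.
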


Our first main result of this section 
is a ``one-sided'' central limit theorem (CLT), which
states
that the description lengths $\ell(f_n(X_1^n|Y_1^n))$ of
an arbitrary sequence of compressors
with side information, $\{f_n\}$, are asymptotically
at best Gaussian, with variance $\sigma^2(\Xp|\Yp)$.
Recall the optimal compressors $\{f^*_n\}$ described in 
Section~\ref{descriptionofanoptimal}

\begin{theorem}[CLT for codelengths]
\label{t:cltc}
Suppose $(\Xp,\Yp)$ satisfy assumption~{\em (M)},
and let $\sigma^2=\sigma^2(\Xp|\Yp)>0$ denote the 
conditional varentropy rate~{\em (\ref{variancedef})}.
Then there exists a sequence of random variables $\{Z_n\;;\;n\geq 1\}$
such that:
\begin{enumerate}
\item[$(a)$] For any sequence of compressors with side information,
$\seq{f_n}$, we have,
\begin{equation}
\liminf_{n\rightarrow \infty}{\Biggl[\frac{\ell(f_n(X_1^n|Y_1^n)) - H(X_1^n|Y_1^n)}{\sqrt{n}} - Z_n\Biggr]} \geq 0, \qquad \mbox{a.s.},
\label{eq:cltLB}
\end{equation}
where,
$Z_n \rightarrow N(0,\sigma^2),$
in distribution, as $n\to\infty.$
\item[$(b)$] The optimal compressors $\{f^*_n\}$ 
achieve the lower bound in~{\em (\ref{eq:cltLB})} with equality.
\end{enumerate}
\end{theorem}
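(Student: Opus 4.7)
The plan is to reduce both claims to corresponding statements about the conditional information density $-\log P(X_1^n|Y_1^n)$, via the almost-sure approximation of Theorem~\ref{normpointred}, and then to invoke the CLT for the conditional information density developed in Section~\ref{s:density}. The natural candidate for the approximating sequence is
\[
Z_n \;=\; \frac{-\log P(X_1^n|Y_1^n) - H(X_1^n|Y_1^n)}{\sqrt{n}},
\]
which is centred by construction; under assumption~(M), the CLT for the conditional information density (ultimately derived from the almost-sure invariance principle of Theorem~\ref{philippstout}) will give $Z_n \to N(0,\sigma^2)$ in distribution.

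For part~(a), I would apply Theorem~\ref{normpointred} with the choice $B_n = \sqrt{n}$, which certainly grows faster than $\log n$. Part~(a) of that theorem then yields, for any sequence of compressors with side information $\seq{f_n}$,
\[
\liminf_{n\to\infty} \frac{\ell(f_n(X_1^n|Y_1^n)) + \log P(X_1^n|Y_1^n)}{\sqrt{n}} \;\geq\; 0, \qquad \text{a.s.,}
\]
and adding and subtracting $H(X_1^n|Y_1^n)/\sqrt{n}$ inside this $\liminf$ rewrites the left-hand side as exactly the expression inside the $\liminf$ in~(\ref{eq:cltLB}). For part~(b), I would use that, by the general achievability bound invoked in the proof of Theorem~\ref{normpointred}(b), the optimal compressors satisfy $\ell(f_n^*(X_1^n|Y_1^n)) + \log P(X_1^n|Y_1^n) \leq 0$ almost surely; dividing by $\sqrt{n}$ gives the matching reverse inequality, so combining with part~(a) forces equality in~(\ref{eq:cltLB}) for $\seq{f_n^*}$.

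The genuinely hard step is not either of these deductions but the CLT for $Z_n$ itself. Under the mixing conditions~(\ref{eq:mixing}) one does not have an ergodic martingale or a clean $m$-dependent decomposition immediately at hand, and the standard arguments from the unconditional Shannon--McMillan--Breiman setting do not transport verbatim: one must control the joint object $-\log P(X_1^n|Y_1^n)$, which blends log-likelihoods for $(\Xp,\Yp)$ with those for $\Yp$. The paper's strategy is to establish an almost-sure invariance principle for exactly this quantity in Theorem~\ref{philippstout}, generalising the Philipp--Stout invariance principle for the unconditional information density; once this is in place, the present theorem follows quickly from Theorem~\ref{normpointred}, because the scale $\sqrt{n}$ sits comfortably above the logarithmic threshold needed there.
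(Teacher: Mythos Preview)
Your proposal is correct and matches the paper's proof essentially line for line: the paper also sets $Z_n$ to be the (centred) conditional information density divided by $\sqrt{n}$, applies Theorem~\ref{normpointred} with $B_n=\sqrt{n}$, and defers the distributional convergence $Z_n\to N(0,\sigma^2)$ to Theorem~\ref{cltlogp}. Your identification of the almost-sure invariance principle (Theorem~\ref{philippstout}) as the substantive ingredient behind the CLT is exactly how the paper structures the argument.
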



\begin{proof}
Letting $Z_n=[-\log P(X_1^n|Y_1^n)]/\sqrt{n}$, $n\geq 1$, 
and taking 
$B_n = \sqrt{n}$, both results follow by combining 
the approximation results 
of Theorem~\ref{normpointred} 
with the corresponding CLT for the conditional information
density in Theorem~\ref{cltlogp}.
\end{proof}

Our next result is in the form of a 
``one-sided'' law of the iterated logarithm
(LIL) which states that, with probability~1,
the description lengths of any compressor
with side information will have inevitable
fluctuations of order  $\sqrt{2\sigma^2n\log_e\log_2 n}$ bits
around the conditional entropy rate $H(\Xp|\Yp)$;
throughout, $\log_e$ denotes the natural logarithm to base $e$.

\begin{theorem}[LIL for codelengths]
\label{t:lilc}
Suppose $(\Xp,\Yp)$ satisfy assumption~{\em (M)},
and let $\sigma^2=\sigma^2(\Xp|\Yp)>0$ denote the 
conditional varentropy rate~{\em (\ref{variancedef})}.
Then:
\begin{enumerate}[label=(\alph*)]
\item[$(a)$] For any sequence of compressors with
side information, $\seq{f_n}$, we have:
 \begin{align}
& 
\limsup_{n \rightarrow \infty}{\frac{\ell(f_n(X_1^n|Y_1^n)) 
- H(X_1^n|Y_1^n)}{\sqrt{2n\log_e{\log_e{n}}}} } \geq \sigma, 
\qquad \mbox{a.s.},
	\label{eq:lilLB1}\\
\mbox{and}\qquad
&
\liminf_{n \rightarrow \infty}{\frac{\ell(f_n(X_1^n|Y_1^n)
)- H(X_1^n|Y_1^n)}{\sqrt{2n\log_e{\log_e{n}}}} } \geq -\sigma, \qquad \mbox{a.s.}
	\label{eq:lilLB2}
\end{align}
\item[$(b)$] The optimal compressors $\{f^*_n\}$ 
achieve the lower bounds in~{\em (\ref{eq:lilLB1})} 
and~{\em (\ref{eq:lilLB2})} with equality.
\end{enumerate}
\end{theorem}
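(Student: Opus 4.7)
The plan is to follow exactly the template used for the CLT of Theorem~\ref{t:cltc}: combine the approximation result of Theorem~\ref{normpointred} with a corresponding LIL for the conditional information density $-\log P(X_1^n|Y_1^n)$ that would be established in Section~\ref{s:density} in parallel with the CLT of Theorem~\ref{cltlogp}. The natural scaling to feed into Theorem~\ref{normpointred} is $B_n=\sqrt{2n\log_e\log_e n}$, which certainly satisfies $B_n/\log n\to\infty$ and so is admissible.

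From Theorem~\ref{normpointred}$(a)$ applied with this $B_n$, for any sequence of compressors with side information $\{f_n\}$ we have, eventually almost surely,
$$\ell(f_n(X_1^n|Y_1^n)) \;\geq\; -\log P(X_1^n|Y_1^n) + o\bigl(\sqrt{2n\log_e\log_e n}\bigr).$$
Subtract $H(X_1^n|Y_1^n)$, divide by $\sqrt{2n\log_e\log_e n}$, and take $\limsup$ and $\liminf$: provided that the conditional information density satisfies the two-sided LIL
$$\limsup_{n\to\infty}\frac{-\log P(X_1^n|Y_1^n)-H(X_1^n|Y_1^n)}{\sqrt{2n\log_e\log_e n}}=\sigma,\qquad \liminf_{n\to\infty}\frac{-\log P(X_1^n|Y_1^n)-H(X_1^n|Y_1^n)}{\sqrt{2n\log_e\log_e n}}=-\sigma,\quad\mbox{a.s.},$$
the bounds (\ref{eq:lilLB1}) and (\ref{eq:lilLB2}) follow immediately. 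For part $(b)$, recall from the proof of Theorem~\ref{normpointred} that the optimal compressor also satisfies $\ell(f_n^*(X_1^n|Y_1^n))+\log P(X_1^n|Y_1^n)\leq 0$ almost surely, so combining this upper bound with the matching lower bound from Theorem~\ref{normpointred}$(b)$ pins $\ell(f_n^*(X_1^n|Y_1^n))$ to $-\log P(X_1^n|Y_1^n)$ with error $o(\sqrt{2n\log_e\log_e n})$ a.s.; the LIL for the conditional information density then transfers verbatim, giving equality in (\ref{eq:lilLB1}) and (\ref{eq:lilLB2}).

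Thus the entire argument reduces to the two-sided LIL for $-\log P(X_1^n|Y_1^n)$ under Assumption~(M), and this is where the main technical work lies. The natural way to obtain it is as a direct corollary of the almost-sure invariance principle for the conditional information density (Theorem~\ref{philippstout}, proved in Appendix~\ref{philippstoutproof}): that result approximates $-\log P(X_1^n|Y_1^n)-H(X_1^n|Y_1^n)$ by a Brownian motion with variance parameter $\sigma^2 n$ to within a Philipp--Stout-type error which is in particular $o(\sqrt{n\log_e\log_e n})$, and the classical Hartman--Wintner LIL for Brownian motion then delivers exactly the two constants $\pm\sigma$ above. The only minor subtlety in the reduction I sketched is that the centering is by $H(X_1^n|Y_1^n)$ rather than $nH(\Xp|\Yp)$; but under Assumption~(M) the difference $H(X_1^n|Y_1^n)-nH(\Xp|\Yp)$ is $O(1)$ and hence negligible on the LIL scale. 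The hard step is therefore entirely in the invariance principle itself, not in the present deduction.
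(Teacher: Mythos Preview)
Your proposal is correct and follows exactly the same route as the paper: apply Theorem~\ref{normpointred} with $B_n=\sqrt{2n\log_e\log_e n}$ and combine with the LIL for the conditional information density (Theorem~\ref{LILlogpth}), which in turn rests on the almost-sure invariance principle of Theorem~\ref{philippstout}. One minor remark: for the centering issue you raise, the paper only explicitly notes that $H(X_1^n|Y_1^n)-nH(\Xp|\Yp)=o(\sqrt{n})$ (see the proof of Theorem~\ref{cltlogp}); your stronger $O(1)$ claim is in fact true under Assumption~(M) but is not needed, since $o(\sqrt{n})$ already suffices on the LIL scale.
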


\vspace*{-0.2in}

\begin{proof}
Taking 
$B_n = \sqrt{2n\log_2\log_e n}$, the results of
the theorem again follow by combining 
the approximation results 
of Theorem~\ref{normpointred} 
with the corresponding LIL for the conditional information
density in Theorem~\ref{LILlogpth}.
\end{proof}


\medskip

\noindent
{\bf Remarks. }
\begin{enumerate}
\item
Although the results in Theorems~\ref{t:cltc}
and~\ref{t:lilc} are stated for one-to-one
compressors $\{f_n\}$, they remain valid
for the class of prefix-free compressors.
Since prefix-free codes are certainly one-to-one,
the converse bounds in Theorem~\ref{t:cltc}~$(a)$
and~\ref{t:lilc}~$(a)$ are valid as stated,
while for the achievability results it suffices
to consider compressors $f_n^p$ with description lengths
$\ell(f_n^p(x_1^n|y_1^n)))=\lceil-\log P(x_1^n|y_1^n)\rceil$,
and then apply Theorem~\ref{cltlogp}.
\item
Theorem~\ref{t:cltc} says that the compression rate of any sequence
of compressors $\{f_n\}$ will have at best Gaussian 
fluctuations around $H(\Xp|\Yp)$,
$$\frac{1}{n}\ell(f^*_n(X_1^n|Y_1^n))\approx 
N\Big(H(\Xp|\Yp),\frac{\sigma^2(\Xp|\Yp)}{n}\Big),
\qquad\mbox{bits/symbol},$$
and similarly Theorem~\ref{t:lilc} says that,
with probability~1, the description lengths
will have inevitable fluctuations of 
approximately $\pm\sqrt{2n\sigma^2\log_e\log_e n}$ bits
around $nH(\Xp|\Yp)$.

As both of these vanish when $\sigma^2(\Xp|\Yp)$ is
zero, we note that, if the source-side information
pair $(\Xp,\Yp)$ is memoryless, so that $\{(X_n,Y_n)\}$
are independent and identically distributed, 
then the conditional varentropy rate reduces to,
$$\sigma^2(\Xp|\Yp)=\VAR(-\log P(X_1|Y_1)),$$
which is equal to zero if and only if, for each $y\in\clY$,
the conditional distribution of $X_1$ given $Y_1=y$
is uniform on a subset $\clX_y\subset\clX$, where
all the $\clX_y$ have the same cardinality.

In the more general case when both the pair process
$(\Xp,\Yp)$ and the side information $\Yp$
are Markov chains, necessary and sufficient conditions
for $\sigma^2(\Xp|\Yp)$ to be zero were recently
established in \cite{gavalakis-arxiv:19}.

\item
In analogy with the source dispersion
for the problem of lossless compression
without side information \cite{kontoyiannis-verdu:14,tan:14},
for an arbitrary source-side information pair $(\Xp,\Yp)$
the conditional dispersion $D(\Xp|\Yp)$ was recently
defined \cite{gavalakis-arxiv:19} as,
$$D(\Xp|\Yp) 
= \limsup_{n\to\infty}{\frac{1}{n}
{\VAR\bigl[{\ell(f_n^*(X_1^n|Y_1^n))\bigr]}}}.
$$
There, it was shown that when
both the pair $(\Xp,\Yp)$ and 
$\Yp$ itself are irreducible and aperiodic Markov chains,
the conditional dispersion coincides with
the conditional varentropy rate:
$$D(\Xp|\Yp)\! = 
\lim_{n\to\infty}{\frac{1}{n}{\VAR\bigl[{\ell(f_n^*(X_1^n|Y_1^n))\bigr]}}} 
= \sigma^2(\Xp|\Yp)
< \infty.
$$
\end{enumerate}

\subsection{Asymptotics of the conditional information density}
\label{s:density}

Here we show that the conditional information density itself,
$-\log P(X_1^n|Y_1^n)$, satisfies a CLT and a LIL.
The next two theorems are consequences of the
almost sure invariance principle established in
Theorem~\ref{philippstout}, 
in the Appendix.

\begin{theorem}
[CLT for the conditional information density]
\label{cltlogp}
$\;$ Suppose $(\Xp,\Yp)$ satisfy assumption~{\em (M)},
and let $\sigma^2=\sigma^2(\Xp|\Yp)>0$ denote the 
conditional varentropy rate~{\em (\ref{variancedef})}.
Then, as $n\to\infty$:
\begin{equation} \label{cltstatementlogp}
\frac{-\log{P(X_1^n|Y_1^n)} 
- H(X_1^n|Y_1^n)}{\sqrt{n}} \rightarrow N(0,\sigma^2),
\qquad\mbox{in distribution}.
\end{equation}
\end{theorem}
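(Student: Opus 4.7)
The plan is to deduce the CLT directly from the almost-sure invariance principle stated as Theorem~\ref{philippstout} in the Appendix. That result asserts that, under Assumption~(M), on a suitably enriched probability space one can construct the process $(\Xp,\Yp)$ together with a standard Brownian motion $\{W(t)\;;\;t\geq 0\}$ such that
\[
-\log P(X_1^n|Y_1^n) - H(X_1^n|Y_1^n) \;=\; \sigma W(n) + o(\sqrt{n}), \qquad \mbox{a.s.},
\]
in fact with a polynomial $O(n^{1/2-\delta})$ remainder for some $\delta>0$. Given this, the CLT is immediate: dividing by $\sqrt{n}$, the remainder vanishes almost surely, while for every $n$ the main term $\sigma W(n)/\sqrt{n}$ has exactly the $N(0,\sigma^2)$ distribution by Brownian scaling. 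Slutsky's theorem then yields the claimed convergence in distribution to $N(0,\sigma^2)$.

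The only conceptual step is to see why the ASIP is applicable in this conditional setting. The starting observation is the identity
\[
-\log P(X_1^n|Y_1^n) \;=\; -\log P(X_1^n, Y_1^n) + \log P(Y_1^n),
\]
which exhibits the conditional information density as a difference of two ordinary information densities, attached to the joint process $(\Xp,\Yp)$ and to $\Yp$. Correspondingly, Lemma~\ref{lem:varentropy} identifies the normalising variance $\sigma^2=\sigma^2(\Xp|\Yp)$ with the asymptotic variance of the stationary infinite-past version
\[
-\log\frac{P(X_1^n,Y_1^n|X_{-\infty}^0,Y_{-\infty}^0)}{P(Y_1^n|Y_{-\infty}^0)},
\]
which is precisely the additive functional that Theorem~\ref{philippstout} approximates by Brownian motion.

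The real obstacle, handled in the Appendix, is establishing the ASIP itself in this conditional setup. The Philipp--Stout construction for the unconditional density $-\log P(X_1^n)$ must be adapted by taking the one-step additive increments to be the log-ratios of the joint and marginal one-step predictive densities; one then has to verify that the mixing rates assumed in~(\ref{eq:mixing}) are quantitatively strong enough to (i) replace these true one-step conditional densities by their finite-past truncations with summable error, using the $\gamma^{(\Xps,\Yps)}$ and $\gamma^{(\Yps)}$ bounds, and (ii) carry the Philipp--Stout martingale-approximation and strong-approximation arguments through for the resulting $\alpha$-mixing additive functional of $(\Xp,\Yp)$. The specific polynomial decay rates in~(\ref{eq:mixing}) are calibrated precisely so that those block-decoupling and moment estimates close. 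Once the ASIP is established, the CLT above is a one-line consequence, as explained in the first paragraph.
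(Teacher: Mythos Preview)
Your approach is essentially the paper's: deduce the CLT from the almost-sure invariance principle of Theorem~\ref{philippstout}. The paper routes through weak convergence of the rescaled process $\{S(nt)/(\sigma\sqrt{n})\}$ in $D[0,1]$ to Brownian motion and reads off the marginal at $t=1$; your Slutsky/Brownian-scaling argument is an equally valid and slightly more direct way to extract the same conclusion from the same ASIP.

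One small inaccuracy to fix: Theorem~\ref{philippstout} is stated with centering $tH(\Xp|\Yp)$, not $H(X_1^n|Y_1^n)$, so what it literally gives is
\[
-\log P(X_1^n|Y_1^n)-nH(\Xp|\Yp)\;=\;-\sigma B(n)+O(n^{1/2-\lambda}),\qquad\mbox{a.s.}
\]
To pass to the centering $H(X_1^n|Y_1^n)$ in~(\ref{cltstatementlogp}) you need the extra deterministic step, made explicit in the paper's proof, that $[nH(\Xp|\Yp)-H(X_1^n|Y_1^n)]/\sqrt{n}\to 0$; this follows from the mixing conditions~(\ref{eq:mixing}) applied separately to $(\Xp,\Yp)$ and to $\Yp$ (cf.\ \cite{philipp-stout:book}). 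Once you insert that line, your argument is complete and matches the paper's.
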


\begin{proof}
The conditions~(\ref{eq:mixing}),
imply that, as $n\to\infty$,
$[nH(\Xp,\Yp) - H(X_1^n,Y_1^n)]/\sqrt{n} \rightarrow 0$,
and 
$[nH(\Yp) - H(Y_1^n)]/\sqrt{n} \rightarrow 0$,
cf.~\cite{philipp-stout:book}, therefore also,
$[nH(\Xp|\Yp) - H(X_1^n|Y_1^n)]/\sqrt{n} \rightarrow 0$,
so it suffices to show that, as $n\to\infty$,
 \begin{equation}\label{cltstatmentlogpsuff}
  \frac{-\log{P(X_1^n|Y_1^n)} - nH(\Xp|\Yp)}{\sqrt{n}} 
  \rightarrow N(0,\sigma^2).
	\qquad\mbox{in distribution}.
 \end{equation}
Let $D=D([0,1],\mathbb{R})$ denote the space of \textit{cadlag} 
(right-continuous with left-hand limits) functions
from $[0,1]$ to $\mathbb{R}$, and define, for each $t\geq 0$,
$S(t) = \log{P(X_{1}^{\lfloor t \rfloor}|Y_{1}^{\lfloor t \rfloor})}
+ t H(\Xp|\Yp)$,
as in Theorem~\ref{philippstout} in the Appendix.
For all $n\geq1, t\in[0,1]$, define
$S_n(t) = S(nt)$. Then Theorem~\ref{philippstout} 
implies that, as $n\to\infty$,
$$\Big\{\frac{1}{\sigma\sqrt{n}}S_n(t)\;;\;t \in [0,1]\Big\}
\rightarrow \{B(t)\;;\; t \in [0,1]\}, \qquad \text{
weakly in $D$},
$$
where $\{B(t)\}$ is a standard Brownian motion;
see, e.g., \cite[Theorem~E, p.~4]{philipp-stout:book}.
In particular, this implies that,
$$\frac{1}{\sigma\sqrt{n}}S_n(1)
\rightarrow B(1)\sim N(0,1),\qquad\mbox{in distribution},
$$
which is exactly~(\ref{cltstatmentlogpsuff}).
\end{proof}
 
\begin{theorem}
[LIL for the conditional information density]
\label{LILlogpth}
$\;$
Suppose $(\Xp,\Yp)$ satisfy assumption~{\em (M)},
and let $\sigma^2=\sigma^2(\Xp|\Yp)>0$ denote the 
conditional varentropy rate~{\em (\ref{variancedef})}.
Then:
\begin{align} \label{LILlogp}
 &\limsup_{n \rightarrow \infty}{\frac{-\log{P(X_1^n|Y_1^n)} - H(X_1^n|Y_1^n)}{\sqrt{2n\log_e{\log_e{n}}}} } = \sigma, \qquad \mbox{a.s.},\\
\mbox{and}\qquad
 &\liminf_{n \rightarrow \infty}{\frac{-\log{P(X_1^n|Y_1^n)} - H(X_1^n|Y_1^n)}{\sqrt{2n\log_e{\log_e{n}}}} } = -\sigma, \qquad \mbox{a.s.}
\label{LILlogp2}
 \end{align}
\end{theorem}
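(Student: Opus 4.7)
The plan is to reduce the theorem to the classical law of the iterated logarithm for standard Brownian motion, via the almost sure invariance principle established in Theorem~\ref{philippstout} of the Appendix. The structure mirrors the argument used for the CLT in Theorem~\ref{cltlogp}, but now with a sharper error rate extracted from the same invariance principle.

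First, I would decompose the normalised quantity as
\[
\frac{-\log P(X_1^n|Y_1^n) - H(X_1^n|Y_1^n)}{\sqrt{2n\log_e\log_e n}}
= \frac{-\log P(X_1^n|Y_1^n) - nH(\Xp|\Yp)}{\sqrt{2n\log_e\log_e n}}
+ \frac{nH(\Xp|\Yp) - H(X_1^n|Y_1^n)}{\sqrt{2n\log_e\log_e n}}.
\]
Under assumption~(M), the entropy‐rate estimates cited in the proof of Theorem~\ref{cltlogp} give $nH(\Xp|\Yp)-H(X_1^n|Y_1^n)=o(\sqrt{n})$, which is certainly $o(\sqrt{n\log_e\log_e n})$, so the second term vanishes and it suffices to analyse the first with $nH(\Xp|\Yp)$ in place of $H(X_1^n|Y_1^n)$.

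Next, I would invoke Theorem~\ref{philippstout}, which produces, on a possibly enlarged probability space, a standard Brownian motion $\{B(t)\;;\;t\geq0\}$ and a constant $\delta>0$ such that, almost surely,
\[
\bigl|-\log P(X_1^{\lfloor t\rfloor}|Y_1^{\lfloor t\rfloor}) - tH(\Xp|\Yp) - \sigma B(t)\bigr| = O\bigl(t^{1/2-\delta}\bigr),
\]
where the polynomial error is precisely what the mixing hypotheses~(\ref{eq:mixing}) on $\alpha^{(\Xps,\Yps)}$, $\gamma^{(\Xps,\Yps)}$ and $\gamma^{(\Yps)}$ are calibrated to deliver. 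Since $t^{1/2-\delta}=o(\sqrt{t\log_e\log_e t})$, this error is asymptotically negligible at the LIL scale. Combined with the classical LIL for Brownian motion,
\[
\limsup_{t\to\infty}\frac{B(t)}{\sqrt{2t\log_e\log_e t}} = 1,\qquad
\liminf_{t\to\infty}\frac{B(t)}{\sqrt{2t\log_e\log_e t}} = -1,\qquad\text{a.s.},
\]
and specialising to $t=n\in\NN$, this gives both~(\ref{LILlogp}) and~(\ref{LILlogp2}).

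The main obstacle is entirely absorbed into Theorem~\ref{philippstout}: establishing a strong invariance principle with a power‐saving error term $O(t^{1/2-\delta})$ for the highly dependent sum $-\log P(X_1^n|Y_1^n)$, whose dependence structure is controlled only through the decay rates of $\alpha^{(\Xps,\Yps)}$, $\gamma^{(\Xps,\Yps)}$ and $\gamma^{(\Yps)}$. The polynomial exponents $336$ and $48$ in~(\ref{eq:mixing}) are exactly what the Philipp–Stout-type construction requires so that the Skorokhod embedding (or Strassen coupling) of partial sums into Brownian motion delivers an error beneath the $\sqrt{n\log_e\log_e n}$ threshold; once that result is granted, the LIL for the conditional information density is a short deduction as above.
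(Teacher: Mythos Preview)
Your proposal is correct and follows essentially the same route as the paper: replace $H(X_1^n|Y_1^n)$ by $nH(\Xp|\Yp)$ using the $o(\sqrt{n})$ entropy-rate estimate, then deduce the LIL directly from the almost sure invariance principle of Theorem~\ref{philippstout} together with the classical LIL for Brownian motion. The only difference is that you are a bit more explicit about the $O(t^{1/2-\delta})$ error being $o(\sqrt{t\log_e\log_e t})$, which the paper leaves implicit.
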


\begin{proof}
As in the proof of (\ref{cltstatementlogp}), it suffices to 
prove (\ref{LILlogp}) with $nH(\Xp|\Yp)$ 
in place of $H(X_1^n|Y_1^n)$. But this is immediate 
from Theorem~\ref{philippstout}, since, for a standard
Brownian motion $\{B(t)\}$,
$$  \limsup_{t \rightarrow \infty}{\frac{B(t)}
{\sqrt{2t\log_e{\log_e{t}}}} } = 1,
 \qquad \mbox{a.s.},
$$
see, e.g., \cite[Theorem~11.18]{Kal:Foundations:2002}.
And similarly for~(\ref{LILlogp2}).
\end{proof}

\section{Idealised LZ Compression with Side Information}
\label{lempelzivsection}

Consider the following idealised version of 
Lempel-Ziv-like
compression with side information.
For a given source-side information pair
$(\Xp,\Yp)=\{(X_n,Y_n)\;;\;n\in\IN\}$, the encoder
and decoder both have access to the infinite past
$(X_{-\infty}^0,Y_{-\infty}^0)$ and to the
current side information $Y_1^n$.
The encoder describes $X_1^n$ to the decoder as follows. 
First she searches for the first appearance of 
$(X_1^n,Y_1^n)$ in 
the past $(X_{-\infty}^0,Y_{-\infty}^0)$, 
that is, for the first $r\geq 1$ such that,
$(X_{-r+1}^{-r+n},Y_{-r+1}^{-r+n})=(X_1^n,Y_1^n)$.
Then she counts how many times $Y_1^n$ appears in
$Y_{-\infty}^0$ between
locations $-r+1$ and~$0$, namely, how many indices $1\leq j<r$
there are, such that $Y_{-j+1}^{-j+n}=Y_1^n$. Say there
are $(\clR_n-1)$ such $j$s. She describes $X_1^n$ to the
decoder by telling him to look at the $\clR_n$th position
where $Y_1^n$ appears in the past $Y_{-\infty}^0$,
and read off the corresponding $X$ string.

This description takes $\approx\log \clR_n$ bits,
and, as it turns out, the resulting compression
rate is asymptotically optimal: As $n\to\infty$,
with probability~1,
\be
\frac{1}{n}\log \clR_n\to H(\Xp|\Yp),\qquad\mbox{bits/symbol}.
\label{eq:heur}
\ee
Moreover, it is second-order optimal,
in that it achieves equality in 
the CLT and LIL bounds given 
in Theorems~\ref{t:cltc}
and~\ref{t:lilc} of Section~\ref{pointwiseasymptotics}.

Our purpose in this section is to make these statements
precise. We will prove~(\ref{eq:heur})
as well as its CLT and LIL refinements,
generalising the corresponding results
for recurrence times without side information
in~\cite{kontoyiannis-jtp}.

The use of recurrence times in understanding
the Lempel-Ziv (LZ) family of algorithms
was introduced by Willems \cite{willems:1}
and Wyner and Ziv \cite{wyner-ziv:1,wyner-ziv:2}.
In terms of practical methods for compression
with side information,
Subrahmanya and Berger \cite{subrahmanya:95} 
proposed a side information analog of the sliding 
window LZ algorithm \cite{ziv-lempel:1}, 
and Uyematsu and Kuzuoka \cite{uyematsu:03}
proposed a side information version of
the incremental parsing 
LZ algorithm \cite{ziv-lempel:2}.
The Subrahmanya-Berger algorithm
was shown to be asymptotically optimal 
in \cite{jacob:08} and \cite{jain-bansal}. 
Different types of LZ-like 
algorithms for compression with side information 
were also considered in \cite{tock:05}.

Throughout this section, we assume $(\Xp,\Yp)$ is
a jointly stationary and ergodic source-side
information pair, with values in the finite
alphabets $\clX,\clY$, respectively. 
We use bold lower-case letters $\xp,\yp$ without subscripts
to denote infinite realizations $x_{-\infty}^\infty,
y_{-\infty}^\infty$ of $\Xp,\Yp$,
and the corresponding bold capital letters $\Xp,\Yp$ 
without subscripts to denote 
the entire process, $\Xp=X_{-\infty}^\infty,
\Yp=Y_{-\infty}^\infty$.

The main quantities of interest
are the recurrence times defined next.

\begin{definition}[Recurrence times]
For a realization $\xp$ of the process $\Xp$,
and $n\geq 1$,
define the {\em repeated recurrence times} $\clR_n^{(j)}(\xp)$ of $x_1^n$,
recursively, as:
\begin{align*}
\clR_{n}^{(1)}(\xp) &=  \inf \{i \geq 1: x_{-i+1}^{-i+n} = x_{1}^{n} \}, \\
\clR_{n}^{(j)}(\xp) &=  \inf \{i > \clR_{n}^{(j-1)}(x) : x_{-i+1}^{-i+n} = x_{1}^{n}\},
\qquad j > 1.  
\end{align*}
For a realization $(\xp,\yp)$ of the pair $(\Xp,\Yp)$ and $n\geq 1$,
the {\em joint recurrence time} $\clR_n(\xp,\yp)$ of $(x_1^n,y_1^n)$ 
is defined as,
\begin{equation*}
\clR_{n}(\xp,\yp) =  \inf \{i \geq 1: (x,y)_{-i+1}^{-i+n} = (x,y)_{1}^{n} \},
\end{equation*}
and the {\em conditional recurrence time} $\clR_n(\xp|\yp)$ of $x_1^n$
among the appearances $y_1^n$ is:
\begin{equation*}
\clR_{n}(\xp|\yp) =  
\inf \Big\{i \geq 1: x_{-\clR_{n}^{(i)}(y)+1}^{-\clR_{n}^{(i)}(y)+n} 
= x_{1}^{n} \Big\}.
\end{equation*}
\end{definition}


An important tool in the asymptotic analysis
of recurrence times is Kac's Theorem \cite{kac:47}. 
Its conditional version in Theorem~\ref{kac}
was first established in \cite{jacob:08}
using Kakutani's induced transformation
\cite{kakutani:43,shields:book}.

\begin{theorem}[Conditional Kac's theorem]
\label{kac}
{\em \cite{jacob:08}}
Suppose $(\Xp,\Yp)$ is a
jointly stationary and ergodic 
source-side information pair.
For any pair of strings
$x_{1}^{n}\in \mathcal{X}^n$,
$y_1^{n} \in \mathcal{Y}^n$:
\begin{equation*}
\BBE[\clR_{n}(\Xp|\Yp) | X_{1}^{n} = x_{1}^{n}, Y_{1}^{n} = y_{1}^{n}]
= \frac{1}{P(x_{1}^{n}|y_{1}^{n})}.
\end{equation*}
\end{theorem}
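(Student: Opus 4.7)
The plan is to reduce the claim to the classical Kac recurrence formula, applied not to the original shift but to Kakutani's \emph{induced transformation} on the event $\{Y_1^n=y_1^n\}$, as suggested by the attribution to \cite{jacob:08} and the reference to \cite{kakutani:43}. First, I would extend the one-sided pair $(\Xp,\Yp)$ to a two-sided jointly stationary and ergodic process via the Kolmogorov extension theorem, and let $T$ denote the left shift on $(\clX\times\clY)^{\IN}$. For the fixed strings $x_1^n,y_1^n$ set
\begin{equation*}
A=\{Y_1^n=y_1^n\},\qquad B=\{X_1^n=x_1^n,\,Y_1^n=y_1^n\}\subset A.
\end{equation*}
If $\PP(B)=0$ the identity is vacuous, so assume $\PP(B)>0$. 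The induced transformation is defined by $T_A(\omega)=T^{\tau_A(\omega)}(\omega)$ with $\tau_A(\omega)=\min\{i\geq 1:T^i\omega\in A\}$; by Kakutani's theorem, $T_A$ preserves, and is ergodic with respect to, the normalised measure $\PP_A(\cdot)=\PP(\,\cdot\cap A)/\PP(A)$.

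The next step is to identify $\clR_n(\Xp|\Yp)$ with a return time of $T_A$. Starting from any $\omega\in B$, successive iterates $T_A^{j}\omega$ visit precisely the times at which $y_1^n$ reappears, enumerated in their natural order; the first $j\geq 1$ for which $T_A^j\omega\in B$ is therefore exactly the number of appearances of $y_1^n$ one has to wait through before encountering the next joint appearance of $(x_1^n,y_1^n)$. Under the two-sided stationary extension, the conditional distribution of this forward-looking return time given $B$ coincides with the conditional distribution, given $\{X_1^n=x_1^n,Y_1^n=y_1^n\}$, of the backward-looking quantity $\clR_n(\Xp|\Yp)$ in the statement. Applying the classical Kac theorem to the ergodic system $(A,\PP_A,T_A)$ and the measurable subset $B\subset A$ then gives
\begin{equation*}
\BBE\bigl[\clR_n(\Xp|\Yp)\bigm| B\bigr]
=\BBE_{\PP_B}\bigl[\tau_B^{T_A}\bigr]
=\frac{1}{\PP_A(B)}
=\frac{\PP(A)}{\PP(B)}
=\frac{P(y_1^n)}{P(x_1^n,y_1^n)}
=\frac{1}{P(x_1^n|y_1^n)},
\end{equation*}
which is exactly the desired identity.

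I expect the main obstacle to be the clean justification of the identification between the backward-looking recurrence $\clR_n(\Xp|\Yp)$ (which scans the past $(X_{-\infty}^0,Y_{-\infty}^0)$) and the forward-looking first return time of the induced system $(A,T_A)$. The standard route is to pass to the two-sided stationary extension and invoke the invariance of the joint law under the shift, so that past and future copies of $y_1^n$, enumerated in natural order, produce the same conditional distribution for the index of the first joint match. A secondary, milder point to check is that the candidate return indices $\{i:T^i\omega\in A\}$ may correspond to overlapping windows of length $n$; however, both $\clR_n^{(j)}$ and $T_A$ use \emph{every} such index without collapsing overlaps, so the two enumerations agree and no correction factor appears.
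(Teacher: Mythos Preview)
The paper does not give its own proof of this theorem: it is quoted from \cite{jacob:08}, and the surrounding text explicitly says that ``its conditional version in Theorem~\ref{kac} was first established in \cite{jacob:08} using Kakutani's induced transformation.'' Your proposal follows precisely that route---induce on $A=\{Y_1^n=y_1^n\}$, identify $\clR_n(\Xp|\Yp)$ with the first-return time to $B=\{(X_1^n,Y_1^n)=(x_1^n,y_1^n)\}$ under the induced map, and apply the classical Kac formula in the ergodic system $(A,\PP_A,T_A)$---and the computation $\BBE_{\PP_B}[\tau_B^{T_A}]=1/\PP_A(B)=\PP(A)/\PP(B)=1/P(x_1^n\mid y_1^n)$ is correct.

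The only point worth tightening is the backward/forward identification you already flag. Rather than arguing distributionally via time reversal, it is cleaner to note that $\clR_n(\Xp|\Yp)$ is \emph{exactly} the first-return time to $B$ under the transformation induced by the \emph{inverse} shift $T^{-1}$ on $A$; since $T^{-1}$ is also measure-preserving and ergodic, Kakutani and Kac apply verbatim and no separate distributional equality is needed. Your remark that both $\clR_n^{(j)}$ and the induced map count every (possibly overlapping) occurrence of $y_1^n$ is correct and necessary for the identification to hold.
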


The following result states that we can asymptotically
approximate $\log \clR_n(X|Y)$ by the 
conditional information density
not just in expectation as in Kac's theorem,
but also with probability~1.
Its proof is in Appendix~\ref{recurrence}.

\begin{theorem} 
\label{strongapproxtheorem}
Suppose $(\Xp,\Yp)$ is a
jointly stationary and ergodic 
source-side information pair.
For any sequence $\{c_n\}$ of non-negative 
real numbers such that $\sum_n{n2^{-c_n}}<\infty$, we have:
\ben
(i)&&
\log \clR_{n}(\Xp|\Yp)
-\log\Big(\frac{1}{P(X_{1}^{n}|Y_{1}^{n})}\Big)
\leq c_n, \qquad \text{eventually a.s.}\\
(ii)&&
\log \clR_{n}(\Xp|\Yp)
-\log \Big(\frac{1}{P(X_{1}^{n}|Y_{1}^{n}, Y_{-\infty}^{0}, X_{-\infty}^{0})}
	\Big)
 \geq -c_n, \qquad \text{eventually a.s.}\\
(iii)&&
\log \clR_{n}(\Xp|\Yp)
-\log\Big(
\frac
{P(Y_{1}^{n}|Y_{-\infty}^{0})}
{P(X_{1}^{n},Y_{1}^{n}|Y_{-\infty}^{0}, X_{-\infty}^{0})}
\Big) \geq -2c_n, \qquad \text{eventually a.s.}
\een
\end{theorem}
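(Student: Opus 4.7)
All three inequalities will be established by verifying a summable upper bound on the probability of the corresponding ``bad'' event, then invoking the Borel--Cantelli lemma and taking logarithms. Part~$(i)$ is the easy direction: applying the conditional version of Kac's theorem (Theorem~\ref{kac}) followed by Markov's inequality conditional on $(X_1^n,Y_1^n)$,
\ban
\PP\Big(\clR_n(\Xp|\Yp) > \tfrac{2^{c_n}}{P(X_1^n|Y_1^n)}\Big)
\;=\; \BBE\Big[\PP\Big(\clR_n(\Xp|\Yp) > \tfrac{2^{c_n}}{P(X_1^n|Y_1^n)} \,\Big|\, X_1^n,Y_1^n\Big)\Big]
\;\leq\; 2^{-c_n}.
\ean
Since $\sum_n n\,2^{-c_n} < \infty$ implies $\sum_n 2^{-c_n} < \infty$, Borel--Cantelli yields $(i)$.

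\textbf{Part~$(ii)$, the main task.} Writing $P_{\!*}(x_1^n) := P(x_1^n \mid Y_1^n, X_{-\infty}^0, Y_{-\infty}^0)$, the goal is to prove
\ba
\PP\bigl(\clR_n(\Xp|\Yp) \cdot P_{\!*}(X_1^n) < 2^{-c_n}\bigr) \;=\; O(n\,2^{-c_n}),
\label{e:plan-key}
\ea
from which Borel--Cantelli (using the assumed summability of $\{n\,2^{-c_n}\}$) gives~$(ii)$. To prove~\eqref{e:plan-key}, I would condition on $\mathcal{G} := \sigma(Y_{-\infty}^n, X_{-\infty}^0)$, under which the position $r_k := \clR_n^{(k)}(\yp)$ of the $k$-th appearance of $y_1^n$ in the past is $\mathcal{G}$-measurable, and---whenever $r_k \geq n$---so is the corresponding $X$-string $U_k := X_{-r_k+1}^{-r_k+n}$. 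Since $\{\clR_n(\Xp|\Yp) = k\}$ forces $X_1^n = U_k$, a union bound combined with $P(X_1^n = u \mid \mathcal{G}) = P_{\!*}(u)$ gives
\ban
\PP\bigl(\clR_n(\Xp|\Yp) \cdot P_{\!*}(X_1^n) < 2^{-c_n} \bigm| \mathcal{G}\bigr)
\;\leq\;
\sum_{k:\, r_k \geq n} P_{\!*}(U_k)\IND\bigl(k\, P_{\!*}(U_k) < 2^{-c_n}\bigr) + \mathcal{E}_n,
\ean
where $\mathcal{E}_n$ absorbs the at most $n-1$ ``overlap'' indices with $r_k < n$. Rewriting the main sum as $\sum_u P_{\!*}(u)\cdot \#\{k : U_k = u,\ k P_{\!*}(u) < 2^{-c_n}\}$ and invoking a Kac-type count of the number of joint appearances of $(u,y_1^n)$ in the past produces, in expectation, a bound of order $2^{-c_n}$. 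The overlap term $\mathcal{E}_n$ is handled by a separate union bound over the $n-1$ admissible periods $r < n$ combined with a pointwise estimate on $\PP(X_1^n \text{ has period } r)$; this is where the extra factor of $n$ in~\eqref{e:plan-key} is generated.

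\textbf{Part~$(iii)$ from~$(ii)$.} This follows from $(ii)$ via an auxiliary likelihood-ratio bound. Define $\rho_n := P(Y_1^n \mid Y_{-\infty}^0)\,/\,P(Y_1^n \mid Y_{-\infty}^0, X_{-\infty}^0)$. A direct change-of-measure calculation gives $\BBE[\rho_n] = 1$, so Markov and Borel--Cantelli (using $\sum_n 2^{-c_n} < \infty$) yield $P(Y_1^n \mid Y_{-\infty}^0, X_{-\infty}^0) \geq 2^{-c_n}\,P(Y_1^n \mid Y_{-\infty}^0)$ eventually a.s. Factoring
\ban
P_{\!*}(X_1^n) \;=\; \frac{P(X_1^n, Y_1^n \mid X_{-\infty}^0, Y_{-\infty}^0)}{P(Y_1^n \mid X_{-\infty}^0, Y_{-\infty}^0)},
\ean
substituting into~$(ii)$, and taking logarithms produces~$(iii)$, with the two $2^{-c_n}$ factors combining into the $-2c_n$ on the right-hand side.

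\textbf{Main obstacle.} The central technical difficulty lies in the expectation bound on the main sum of part~$(ii)$: making rigorous the heuristic
$\BBE\bigl[\sum_k P_{\!*}(U_k)\IND(k P_{\!*}(U_k) < \delta)\bigr] \lesssim \delta$
requires a careful Kac-style counting argument for the joint appearances of $(u, y_1^n)$ in the semi-infinite past, together with a controlled interchange of the sums over $k$ and $u$. The separate accounting of the overlap indices $r_k < n$, which produces the factor of $n$ in the summability condition, is a subsidiary but nontrivial bookkeeping issue.
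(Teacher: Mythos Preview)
Your treatment of parts~$(i)$ and~$(iii)$ matches the paper's proof exactly: Kac's theorem plus Markov's inequality for~$(i)$, and a likelihood-ratio/change-of-measure bound combined with~$(ii)$ for~$(iii)$.

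For part~$(ii)$, however, your route is considerably more complicated than necessary, and the ``main obstacle'' you flag is a genuine gap that your outline does not close. After your union bound you need to control
\[
\sum_{k} P_{\!*}(U_k)\,\IND\bigl(k\,P_{\!*}(U_k)<2^{-c_n}\bigr),
\]
and the termwise bound $P_{\!*}(U_k)\,\IND(\cdots)\le 2^{-c_n}/k$ gives a divergent harmonic series unless you can restrict the range of $k$. Rewriting as $\sum_u P_{\!*}(u)\cdot\#\{k:U_k=u,\ k<2^{-c_n}/P_{\!*}(u)\}$ does not help without a pointwise (not just average) bound on the count, and the ``Kac-type count'' you invoke is not made precise. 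The overlap bookkeeping you introduce is also a red herring: it is not where the factor of~$n$ actually comes from.

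The paper sidesteps all of this with one clean observation. Conditioning on $\mathcal{G}=\sigma(X_{-\infty}^0,Y_{-\infty}^n)$ as you do, write the bad probability directly as a sum over the \emph{values} $z_1^n$ of $X_1^n$ rather than over the index~$k$:
\[
\PP(\text{bad}\mid\mathcal{G})
=\sum_{z_1^n\in G_n} P_{\!*}(z_1^n)
\le \sum_{z_1^n\in G_n}\frac{2^{-c_n}}{\clR_n(x_{-\infty}^0\ast z_1^n\mid y_{-\infty}^n)}
\le 2^{-c_n}\sum_{z_1^n\in\clX^n}\frac{1}{\clR_n(x_{-\infty}^0\ast z_1^n\mid y_{-\infty}^n)}.
\]
The key point is that, with $x_{-\infty}^0$ and $y_{-\infty}^n$ fixed, the map $z_1^n\mapsto \clR_n(x_{-\infty}^0\ast z_1^n\mid y_{-\infty}^n)$ is \emph{injective} into the positive integers (for each $j$ there is at most one $z_1^n$ with conditional recurrence time equal to~$j$, overlaps included). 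Hence the last sum is bounded by the harmonic sum $\sum_{j=1}^{|\clX|^n}1/j\le Dn$, giving the conditional bound $Dn\,2^{-c_n}$ and therefore~\eqref{e:plan-key} immediately. No union bound, no Kac-type counting, and no separate overlap analysis are needed; the factor of~$n$ arises solely from the length of this harmonic sum.
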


Next we state the main consequences of Theorem~\ref{strongapproxtheorem}
that we will need. Recall the definition
of the coefficients $\gamma^{(\Zps)}(d)$ from 
Section~\ref{s:finer}.
Corollary~\ref{strongapproxcorollary}
is proved in Appendix~\ref{recurrence}.

\begin{corollary}
Suppose $(\Xp,\Yp)$ are jointly stationary and ergodic.
\label{strongapproxcorollary}
\begin{enumerate}
\item[$(a)$]
If, in addition,
$\sum_{d}{\gamma^{(\Xps,\Yps)}(d)} < \infty$ 
and  $\sum_{d}{\gamma^{(\Yps)}(d)} < \infty$,
then for any $\beta > 0$:
\begin{equation*}
\log[\clR_{n}(\Xp|\Yp)P(X_{1}^{n}|Y_{1}^{n})] = o(n^{\beta}), \qquad\mbox{a.s.}
\end{equation*}

\item[$(b)$]
In the general jointly ergodic case, we have:
\begin{equation*}
\log[\clR_{n}(\Xp|\Yp)P(X_{1}^{n}|Y_{1}^{n})] = o(n), \qquad\mbox{a.s.}
\end{equation*}
\end{enumerate}
\end{corollary}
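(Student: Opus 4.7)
The plan is to invoke Theorem~\ref{strongapproxtheorem} with the sequence $c_n = 3\log n$, which satisfies $\sum_n n\, 2^{-c_n} = \sum_n n^{-2} < \infty$ and is $o(n^\beta)$ for every $\beta>0$ (in particular $o(n)$). The upper-bound half of both~$(a)$ and~$(b)$ is then immediate from Theorem~\ref{strongapproxtheorem}~$(i)$, which gives $\log[\clR_n(\Xp|\Yp) P(X_1^n|Y_1^n)] \leq 3\log n$ eventually almost surely; only the lower bound remains.

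For the lower bound, part~$(iii)$ of Theorem~\ref{strongapproxtheorem}, combined with the identity $\log P(X_1^n|Y_1^n) = \log P(X_1^n,Y_1^n) - \log P(Y_1^n)$, yields, eventually almost surely,
\[
\log\bigl[\clR_n(\Xp|\Yp) P(X_1^n|Y_1^n)\bigr]
\geq -6\log n
+ \bigl[\log P(Y_1^n|Y_{-\infty}^0) - \log P(Y_1^n)\bigr]
- \bigl[\log P(X_1^n,Y_1^n|X_{-\infty}^0,Y_{-\infty}^0) - \log P(X_1^n,Y_1^n)\bigr].
\]
Since $-6\log n = o(n^\beta)$ for every $\beta > 0$ (and $o(n)$), it suffices to show that each of the two ``conditional-versus-unconditional'' bracketed differences is $o(n^\beta)$ a.s.\ for part~$(a)$, and $o(n)$ a.s.\ for part~$(b)$.

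For the general jointly ergodic case of part~$(b)$, I would invoke the Shannon--McMillan--Breiman theorem to get $-\tfrac{1}{n}\log P(Y_1^n) \to H(\Yp)$ a.s., and the ergodic theorem applied to the stationary sequence $\{-\log P(Y_k|Y_{-\infty}^{k-1})\}$ to get $-\tfrac{1}{n}\log P(Y_1^n|Y_{-\infty}^0) \to H(\Yp)$ a.s.; the difference is then $o(n)$ a.s., and the argument for the joint process $(\Xp,\Yp)$ is identical. For part~$(a)$ I would prove the stronger bound that each bracket is a.s.\ $O(1)$. For a stationary process $\Zp$ on a finite alphabet $\clZ$, write the telescoping
\[
\log P(Z_1^n|Z_{-\infty}^0) - \log P(Z_1^n)
= \sum_{k=1}^n \xi_k, \qquad \xi_k := \log P(Z_k|Z_{-\infty}^{k-1}) - \log P(Z_k|Z_1^{k-1}),
\]
and bound $\E|\xi_k|$ by conditioning on $Z_{-\infty}^{k-1}$, expanding the absolute value as a sum of $|\clZ|$ pointwise differences (using $P(Z_k = z | Z_{-\infty}^{k-1}) \leq 1$), and invoking stationarity to identify the $k$th summand with $\log P(Z_0=z|Z_{-\infty}^{-1}) - \log P(Z_0=z|Z_{-(k-1)}^{-1})$; this yields $\E|\xi_k| \leq |\clZ|\, \gamma^{(\Zps)}(k-1)$. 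Summability of $\gamma^{(\Yps)}$ and $\gamma^{(\Xps,\Yps)}$ then forces $\sum_k \E|\xi_k| < \infty$, so by Fubini the series $\sum_k \xi_k$ converges absolutely a.s., and its partial sums are a.s.\ bounded. The main technical subtlety is the $|\clZ|\gamma^{(\Zps)}(d)$ step: one must carefully handle the max-over-$z$ in the definition of $\gamma^{(\Zps)}$, and correctly shift indices by stationarity when identifying $\log P(Z_k|Z_1^{k-1})$ with $\log P(Z_0|Z_{-(k-1)}^{-1})$.
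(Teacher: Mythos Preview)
Your proof is correct and follows essentially the same route as the paper: apply Theorem~\ref{strongapproxtheorem}~$(i)$ and~$(iii)$, then control the discrepancy between the finite-past and infinite-past conditional log-probabilities via the $\gamma$-summability telescoping argument (which is precisely the paper's Lemma~\ref{fsufficient}) for part~$(a)$, and via the Shannon--McMillan--Breiman and ergodic theorems for part~$(b)$. Your choice $c_n = 3\log n$ is a bit cleaner than the paper's $c_n = \epsilon n^\beta$ since a single sequence handles all $\beta>0$ at once, but the overall structure is identical.
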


From part~$(b)$ combined with the Shannon-McMillan-Breiman theorem
as in~(\ref{conditionalmcmillan}), we obtain the result~(\ref{eq:heur})
promised in the beginning of this section:
$$\lim_{n\to\infty}\frac{1}{n}\log \clR_n(\Xp|\Yp)\to H(\Xp|\Yp),
\qquad\mbox{a.s.}$$
This was first established in \cite{jacob:08}.
But at this point we have already done the work required
to obtain much finer asymptotic results for the
conditional recurrence time.

For any pair of infinite realizations $(\xp,\yp)$
of $(\Xp,\Yp)$, let $\{\clR^{(\xps|\yps)}(t)\;;\; t \geq 0\}$ be the 
continuous-time path, defined as:
\ben
\clR^{(\xps|\yps)}(t) &=& 0,\qquad\mbox{for}\;t < 1,\\
\clR^{(\xps|\yps)}(t) &=& \log{\clR_{\lfloor t \rfloor}(\xp|\yp)} - 
\lfloor t \rfloor H(\Xp|\Yp),
\qquad\mbox{for}\;t\geq 1.
\een
The following theorem is a direct consequence
of Corollary~\ref{strongapproxcorollary}~$(a)$
combined with Theorem~\ref{philippstout} in the Appendix.
Recall assumption~(M) from Section~\ref{s:finer}.

\begin{theorem}
\label{recurrenceinvariance}
Suppose $(\Xp,\Yp)$ satisfy assumption~{\em (M)},
and let
$\sigma^2=\sigma^2(\Xp|\Yp)>0$ denote the conditional varentropy rate.
Then $\{\clR^{(\Xps|\Yps)}(t)\}$ can be redefined on a richer probability 
space that contains a standard Brownian motion 
$\{B(t)\;;\; t \geq 0\}$ such that, for any $\lambda < 1/294$:
\begin{equation*}
\clR^{(\Xps|\Yps)}(t) - \sigma B(t) = O(t^{1/2-\lambda}), \qquad \mbox{a.s.}
\end{equation*}
\end{theorem}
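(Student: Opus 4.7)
The plan is to combine Corollary~\ref{strongapproxcorollary}~(a), which tightly couples $\log \clR_{n}(\Xp|\Yp)$ to the conditional information density $-\log P(X_{1}^{n}|Y_{1}^{n})$ at a sub-polynomial rate, with the almost sure invariance principle in Theorem~\ref{philippstout}, which provides a Brownian coupling of $-\log P(X_{1}^{n}|Y_{1}^{n})$ at the polynomial rate $O(n^{1/2-\lambda})$. Since both are almost sure approximations, they transport to a common enriched probability space.

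First I would verify that the hypotheses of Corollary~\ref{strongapproxcorollary}~(a) hold under assumption~(M): the decay $\gamma^{(\Xps,\Yps)}(d) = O(d^{-48})$ and $\gamma^{(\Yps)}(d) = O(d^{-48})$ in~(\ref{eq:mixing}) easily give $\sum_d \gamma^{(\Xps,\Yps)}(d) < \infty$ and $\sum_d \gamma^{(\Yps)}(d) < \infty$. Consequently, for every $\beta > 0$,
$$\log \clR_{n}(\Xp|\Yp) + \log P(X_{1}^{n}|Y_{1}^{n}) = o(n^{\beta}), \qquad \text{a.s.}$$

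Next, under the same assumption (M), Theorem~\ref{philippstout} redefines $\{-\log P(X_{1}^{n}|Y_{1}^{n})\}$ on a richer probability space carrying a standard Brownian motion $\{B(t);t\geq 0\}$ so that, for every $\lambda < 1/294$,
$$\bigl[-\log P(X_{1}^{\lfloor t \rfloor}|Y_{1}^{\lfloor t \rfloor}) - \lfloor t \rfloor H(\Xp|\Yp)\bigr] - \sigma B(t) = O(t^{1/2-\lambda}), \qquad \text{a.s.}$$
Because the recurrence-time/information-density coupling is purely almost sure and requires no extra randomisation, $\clR^{(\Xps|\Yps)}$ lives on the same enriched space. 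Adding the two displays, and absorbing $\lfloor t \rfloor H(\Xp|\Yp) - t H(\Xp|\Yp) = O(1)$, one gets, for $t\geq 1$ with $n=\lfloor t\rfloor$,
$$\clR^{(\Xps|\Yps)}(t) - \sigma B(t) = O(t^{1/2-\lambda}) + o(n^{\beta}) + O(1).$$
Finally, I would pick any $\beta \in (0, 1/2 - \lambda)$, which collapses the middle term into the first. The case $t<1$ is trivial since $\clR^{(\Xps|\Yps)}(t)=0$ and $B$ is bounded on $[0,1]$.

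The hard part has been done upstream, in Theorem~\ref{philippstout} (the invariance principle itself, which is where the constant $1/294$ originates) and in Corollary~\ref{strongapproxcorollary}~(a) (which already absorbs the rather delicate almost sure analysis of $\clR_n$). The only thing that needs checking here is that the $o(n^{\beta})$ approximation on the recurrence side is uniformly stronger than the Brownian error $O(t^{1/2-\lambda})$, which is immediate because $\beta>0$ can be chosen arbitrarily small.
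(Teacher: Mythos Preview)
Your proposal is correct and follows exactly the route the paper indicates: it combines Corollary~\ref{strongapproxcorollary}~(a) with Theorem~\ref{philippstout}, verifying that the $o(n^\beta)$ error from the former is absorbed by the $O(t^{1/2-\lambda})$ Brownian error from the latter. The only cosmetic point is that your display for the invariance principle carries an implicit sign flip relative to the definition of $S(t)$ in~(\ref{stdef}), but this is harmless since $-B(t)$ is again a standard Brownian motion.
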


Two immediate consequences of Theorem~\ref{recurrenceinvariance}
are the following: 

\begin{theorem}[CLT and LIL for the conditional recurrence times]
\label{recurrenceCLT}
$\;$ Suppose $(\Xp,\Yp)$ satisfy assumption~{\em (M)}
and let 
$\sigma^2=\sigma^2(\Xp|\Yp)>0$ denote the conditional varentropy rate.
Then:
\ben
(a)&&
\frac{\log{\clR_n(\Xp|\Yp)} - H(X_1^n|Y_1^n)}{\sqrt{n}} 
\rightarrow N(0,\sigma^2),\qquad 
\mbox{in distribution},\;
\mbox{as}\;n\to\infty.\\
(b)&&
 \limsup_{n \rightarrow \infty}{\frac{\log{\clR_n(\Xp|\Yp)} 
- H(X_1^n|Y_1^n)}{\sqrt{2n\log_e{\log_e{n}}}} } = \sigma, \qquad\mbox{a.s.}
\een
\end{theorem}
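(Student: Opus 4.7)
The plan is to read off both statements directly from the almost sure invariance principle in Theorem~\ref{recurrenceinvariance}. Specialised to integer times, that result tells us that, on a suitable probability space carrying a standard Brownian motion $\{B(t)\}$,
\begin{equation*}
\log \clR_n(\Xp|\Yp) - nH(\Xp|\Yp) = \sigma B(n) + O(n^{1/2-\lambda}), \qquad \text{a.s.},
\end{equation*}
for some $\lambda \in (0,1/294)$. The only preliminary step needed is to replace the deterministic centering $nH(\Xp|\Yp)$ by $H(X_1^n|Y_1^n)$. For this I would invoke exactly the estimate used in the proof of Theorem~\ref{cltlogp}: under assumption~(M), the mixing bounds~(\ref{eq:mixing}) together with the Philipp--Stout entropy estimates give $[nH(\Xp,\Yp)-H(X_1^n,Y_1^n)]/\sqrt{n}\to 0$ and $[nH(\Yp)-H(Y_1^n)]/\sqrt{n}\to 0$, hence $nH(\Xp|\Yp)-H(X_1^n|Y_1^n) = o(\sqrt{n})$.

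For part~$(a)$, divide the displayed identity by $\sqrt{n}$. The error $O(n^{1/2-\lambda})/\sqrt{n} = O(n^{-\lambda})$ vanishes almost surely; the deterministic shift $[nH(\Xp|\Yp)-H(X_1^n|Y_1^n)]/\sqrt{n}$ is $o(1)$ by the bound above; and by Brownian scaling $\sigma B(n)/\sqrt{n} \sim N(0,\sigma^2)$ exactly for every $n$. Slutsky's theorem then delivers the convergence in distribution to $N(0,\sigma^2)$.

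For part~$(b)$, divide the same identity by $\sqrt{2n\log_e\log_e n}$ and take the limsup. Both the error term $O(n^{1/2-\lambda})$ and the entropy discrepancy $o(\sqrt{n})$ are negligible compared to $\sqrt{2n\log_e\log_e n}$, so the limsup equals $\sigma \cdot \limsup_n B(n)/\sqrt{2n\log_e\log_e n}$, which is $\sigma$ a.s.\ by the classical Khintchine law of the iterated logarithm for Brownian motion; cf.~\cite[Theorem~11.18]{Kal:Foundations:2002}.

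There is no substantive obstacle: the deep work has been carried out in Theorem~\ref{recurrenceinvariance}, which couples $\log\clR_n(\Xp|\Yp)$ to a Brownian path with a strong remainder. The only point demanding care is confirming that $nH(\Xp|\Yp)$ can replace $H(X_1^n|Y_1^n)$ at the relevant scales, and this is already implicit in the arguments of Section~\ref{s:density}.
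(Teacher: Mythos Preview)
Your proposal is correct and matches the paper's approach: the paper simply states that both parts are ``immediate consequences of Theorem~\ref{recurrenceinvariance}'' without spelling out details, and you have filled those in exactly as intended, mirroring the proofs of Theorems~\ref{cltlogp} and~\ref{LILlogpth} (including the entropy-centering swap and the appeal to the Brownian LIL).
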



\begin{appendices}

\appendixpage

\section{Invariance Principle for the Conditional Information Density}
\label{philippstoutproof}

This Appendix is devoted to the proof of Theorem~\ref{philippstout},
which generalises the corresponding almost sure invariance principle of
Philipp and Stout \cite[Theorem~9.1]{philipp-stout:book}
for the (unconditional) information density $-\log P(X_1^n)$.

\begin{theorem} 
\label{philippstout}
Suppose $(\Xp,\Yp)$ is a jointly stationary and ergodic process,
satisfying the mixing 
conditions~{\em (\ref{eq:mixing})}.
For $t \geq 0$, let,
\begin{equation} 
\label{stdef}
S(t) = \log{P(X_{1}^{\lfloor t \rfloor}|Y_{1}^{\lfloor t \rfloor})} 
+ t H(\Xp|\Yp).
\end{equation}
Then the following series converges:
\begin{align*}
\sigma^{2} 
&= 	\mathbb{E}\bigl[\log{P(X_{0},Y_{0}|X_{-\infty}^{-1},
	Y_{-\infty}^{-1})}+H(\Xp|\Yp)\bigr]^{2} \\
&+ 
	2\sum_{k=1}^{\infty}{\mathbb{E}
	\Big\{\bigl[ \log{P(X_{0},Y_{0}
	|X_{-\infty}^{-1},Y_{-\infty}^{-1})} + H(\Xp|\Yp)\bigr]
	\bigl[\log{P(X_{k},Y_{k}|X_{-\infty}^{k-1},Y_{-\infty}^{k-1})} 
	+ H(\Xp|\Yp)\bigr]}\Big\}.
\end{align*}
If $\sigma^2 > 0$,
then, without changing its distribution, we can redefine 
the process $\{S(t)\;;\;t\geq0\}$ on a richer probability space that 
contains a standard Brownian motion 
$\{B(t)\;;\; t \geq 0\}$, such that,
\begin{equation} \label{StBtapprox}
S(t) - \sigma B(t) = O(t^{\frac{1}{2} - \lambda}), \qquad \mbox{a.s.},
\end{equation}
as $t\to\infty$,
for each $\lambda < 1/294$.
\end{theorem}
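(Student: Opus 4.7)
My plan is to follow closely the Philipp--Stout strategy for the unconditional information density in~\cite{philipp-stout:book}, adapted to the conditional setting via the identity
\begin{equation*}
\log P(X_1^n|Y_1^n) \;=\; \log P(X_1^n,Y_1^n) - \log P(Y_1^n).
\end{equation*}
First I would reduce to integer times $n$, since the fractional-part contribution to $S(t)$ is bounded in absolute value by $H(\Xp|\Yp)$, and the Brownian increments $|B(t)-B(\lfloor t\rfloor)|$ are a.s.\ $o(t^{1/2-\lambda})$. Telescoping each of the two log-probabilities via the chain rule and replacing the finite-past conditionings by their infinite-past counterparts, one writes
\begin{equation*}
S(n) \;=\; \sum_{k=1}^{n}\xi_k \;+\; E_n,
\end{equation*}
where
\begin{equation*}
\xi_k \;=\; \log P(X_k,Y_k|X_{-\infty}^{k-1},Y_{-\infty}^{k-1}) - \log P(Y_k|Y_{-\infty}^{k-1}) + H(\Xp|\Yp)
\end{equation*}
defines a stationary, mean-zero sequence, and $E_n$ collects the errors from the replacement of conditioning.

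\textbf{Controlling $E_n$.} Each summand of $E_n$ has $L^1$ norm bounded by either $\gamma^{(\Xps,\Yps)}(k-1)$ or $\gamma^{(\Yps)}(k-1)$, both $O(k^{-48})$ by assumption~(M). Since $\sum_k k^{-48}<\infty$, Tonelli gives $\sum_k|\cdot|<\infty$ almost surely, and hence $E_n = O(1)$ a.s. This reduces the theorem to an almost-sure invariance principle for the partial sums $\sum_{k=1}^n\xi_k$.

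\textbf{Invariance principle for $\{\xi_k\}$.} The sequence $\{\xi_k\}$ is stationary, mean-zero, and adapted to the natural two-sided filtration of $(\Xp,\Yp)$, so its $\alpha$-mixing coefficients inherit the rate $\alpha^{(\Xps,\Yps)}(d)=O(d^{-336})$ from~(\ref{eq:mixing}). I would then invoke the Philipp--Stout almost-sure invariance principle for $\alpha$-mixing stationary sequences (the same result used in \cite[Theorem~9.1]{philipp-stout:book} for the unconditional case) on an enlarged probability space to produce a standard Brownian motion $\{B(t)\}$ satisfying
\begin{equation*}
\sum_{k=1}^{n}\xi_k - \sigma B(n) \;=\; O(n^{1/2-\lambda}), \qquad \text{a.s.},
\end{equation*}
for every $\lambda<1/294$; the constant $1/294$ is precisely the exponent delivered by that theorem under polynomial mixing at rate $d^{-336}$. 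Convergence of the series defining $\sigma^2$, and the identification of $\sigma^2$ with the claimed expression, is then the standard asymptotic-variance computation for a partial sum of a stationary $\alpha$-mixing sequence. Combined with the bound $E_n=O(1)$ and the passage from $n$ to $t$, this yields~(\ref{StBtapprox}).

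\textbf{Main obstacle.} The principal technical hurdle is verifying the hypotheses of the Philipp--Stout invariance principle for $\{\xi_k\}$. Under~(M)(a)--(b), where the transition probabilities are bounded away from zero, the $\xi_k$ are uniformly bounded and the hypotheses are immediate. In the general mixing case~(M)(c), however, the conditional log-probabilities may take arbitrarily large negative values, so one must either invoke a version of the theorem requiring only a $(2+\delta)$-moment together with the polynomial mixing rate, or implement a truncation reducing to a bounded sequence at the cost of a negligible error (this is precisely where the polynomial decay rates $336$ and $48$ in assumption~(M) play their role, via the arithmetic of exponents in the Philipp--Stout block decomposition). A secondary subtlety is that a \emph{single} Brownian motion must approximate both $\log P(X_1^n,Y_1^n)$ and $\log P(Y_1^n)$; this is handled cleanly here by combining both terms into the single sum $\sum_k\xi_k$ \emph{before} invoking the invariance principle, rather than applying Philipp--Stout separately to the joint process and to $\Yp$ and attempting an ex-post coupling of the two resulting Brownian motions.
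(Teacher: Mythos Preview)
Your overall architecture matches the paper's proof exactly: define $\xi_k$ (the paper calls it $\eta_k=f_k+h$), show via the $\gamma$-conditions that $E_n=O(1)$ a.s.\ (the paper's Lemma~\ref{fsufficient}), and then invoke the Philipp--Stout machinery for the partial sums. The reduction and the treatment of $E_n$ are fine.

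The gap is in your invocation of the invariance principle. You write that $\{\xi_k\}$ ``is adapted to the natural two-sided filtration of $(\Xp,\Yp)$, so its $\alpha$-mixing coefficients inherit the rate $\alpha^{(\Xps,\Yps)}(d)$.'' This is not true: each $\xi_k$ depends on the \emph{entire infinite past} $(X_{-\infty}^{k-1},Y_{-\infty}^{k-1})$, so $\sigma(\xi_j:j\geq d)$ is not contained in $\mathcal{F}_d^\infty$ and the $\alpha$-mixing coefficients of the sequence $\{\xi_k\}$ are \emph{not} bounded by $\alpha^{(\Xps,\Yps)}(d)$. Consequently you cannot apply an invariance principle for $\alpha$-mixing sequences directly to $\{\xi_k\}$. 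The paper circumvents this by appealing not to an $\alpha$-mixing theorem but to \cite[Theorem~7.1]{philipp-stout:book}, which is designed for exactly this situation: it requires (i)~the $\alpha$-mixing rate of the \emph{underlying} filtration $\{\mathcal{F}_a^b\}$, together with (ii)~a quantitative approximation condition $\|\xi_n-\mathbb{E}[\xi_n\,|\,\mathcal{F}_{n-\ell}^{\,n}]\|_4=O(\ell^{-11/2})$, i.e.\ that $\xi_n$ can be well approximated by a function of a finite window. Verifying~(ii) is where the $\gamma$-conditions enter a \emph{second} time (beyond controlling $E_n$): one needs $\|f_0-\log\frac{P(X_0,Y_0|X_{-k}^{-1},Y_{-k}^{-1})}{P(Y_0|Y_{-k}^{-1})}\|_4$ to decay polynomially in $k$, which the paper obtains from the corresponding estimate in the unconditional case (its Lemma~\ref{factsfromproof}, part~\ref{9.2}) applied separately to $(\Xp,\Yp)$ and to $\Yp$.

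As for your ``main obstacle'': the moment issue is real but not the crux. The same Lemma~\ref{factsfromproof} (part~\ref{9.1}) already gives $\mathbb{E}|f_0|^r<\infty$ for every $r>0$ from the $\gamma$-decay alone, so no truncation is needed; the $4$th-moment hypothesis of \cite[Theorem~7.1]{philipp-stout:book} is then immediate via Minkowski. The genuinely nontrivial verification is the finite-window approximation~(ii), which your proposal does not mention.
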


To simplify the notation, we write 
$h=H(\Xp|\Yp)$ and define,
\begin{equation} 
\label{fdef}
f_{j} = \log\left({\frac{P(X_{j},Y_{j}|X_{-\infty}^{j-1},Y_{-\infty}^{j-1})}
{P(Y_{j}|Y_{-\infty}^{j-1})}}\right),
\qquad j\geq 0,
\end{equation}
so that, for example, the variance $\sigma^2$ in
the theorem becomes,
\be
\sigma^{2} 
= 	\mathbb{E}[(f_{0} + h)^{2}]
	+ 2\sum_{k=1}^{\infty}{\mathbb{E}[(f_{0} + h)(f_{k} + h)]}.
\label{eq:newvar}
\ee

\begin{lemma}\label{fsufficient}
If $\sum_{d}{\gamma^{(\Xps,\Yps)}(d)} < \infty$ 
and  $\sum_{d}{\gamma^{(\Yps)}(d)} < \infty$ then,
as $n\to\infty$:
$$\sum_{k = 1}^{n}{f_{k}} - \log{P(X_{1}^{n}|Y_{1}^{n})} = O(1), \qquad 
\mbox{a.s.}
$$
\end{lemma}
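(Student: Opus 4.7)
The plan is to decompose $\log P(X_1^n|Y_1^n)$ via the chain rule into a telescoping sum, then compare term-by-term with the $f_k$'s using the $\gamma$-mixing coefficients, and finally invoke Borel–Cantelli / monotone convergence to upgrade expected bounds to an almost-sure $O(1)$ bound.

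First I would write, using Bayes' rule and the chain rule,
\begin{align*}
\log P(X_1^n|Y_1^n)
&= \log P(X_1^n, Y_1^n) - \log P(Y_1^n)\\
&= \sum_{k=1}^n \log P(X_k,Y_k|X_1^{k-1},Y_1^{k-1})
   -\sum_{k=1}^n \log P(Y_k|Y_1^{k-1}),
\end{align*}
so that
\begin{align*}
\sum_{k=1}^n f_k - \log P(X_1^n|Y_1^n)
&= \sum_{k=1}^n\!\Bigl[\log P(X_k,Y_k|X_{-\infty}^{k-1},Y_{-\infty}^{k-1})
 - \log P(X_k,Y_k|X_1^{k-1},Y_1^{k-1})\Bigr]\\
&\quad -\sum_{k=1}^n\!\Bigl[\log P(Y_k|Y_{-\infty}^{k-1})
 - \log P(Y_k|Y_1^{k-1})\Bigr].
\end{align*}
It suffices to prove each of the two bracketed sums converges absolutely almost surely.

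Next I would bound the expected absolute value of the $k$th term of each sum by the mixing coefficient with gap $k-1$. By stationarity, the $k$th term of the first sum has the same distribution as
$\log P(X_0,Y_0|X_{-\infty}^{-1},Y_{-\infty}^{-1}) - \log P(X_0,Y_0|X_{-(k-1)}^{-1},Y_{-(k-1)}^{-1})$.
Conditioning on $U=(X_{-\infty}^{-1},Y_{-\infty}^{-1})$ and summing over the possible values of $(X_0,Y_0)$ using $P(X_0=x,Y_0=y|U)\leq 1$, I get
\begin{equation*}
\BBE\bigl|\log P(X_0,Y_0|U) - \log P(X_0,Y_0|X_{-(k-1)}^{-1},Y_{-(k-1)}^{-1})\bigr|
\leq |\clX||\clY|\,\gamma^{(\Xps,\Yps)}(k-1),
\end{equation*}
directly from the definition of $\gamma^{(\Xps,\Yps)}$. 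The same argument for the $Y$-terms gives a bound by $|\clY|\,\gamma^{(\Yps)}(k-1)$.

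Finally, summing over $k$ and using the hypotheses $\sum_d \gamma^{(\Xps,\Yps)}(d)<\infty$ and $\sum_d \gamma^{(\Yps)}(d)<\infty$, the expected total of absolute differences is finite; by monotone convergence (Tonelli), the series of absolute differences is a.s.\ finite, so each partial sum is bounded a.s.\ by a finite random constant. This establishes $\sum_{k=1}^n f_k - \log P(X_1^n|Y_1^n) = O(1)$ a.s. The one subtlety to be careful about is the step $P(X_0=x,Y_0=y|U)\leq 1$ used to pass from the inner expectation of the log-ratio of the random outcome $(X_0,Y_0)$ to the $\gamma$-coefficient, which is defined with the outcome $z$ fixed; this is the main (minor) obstacle, and it costs only a multiplicative alphabet-size factor that does not affect summability.
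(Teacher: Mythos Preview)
Your proof is correct and follows essentially the same route as the paper: define the finite-past analogues of the $f_k$'s via the chain rule, bound $\BBE|f_k-g_k|$ by $|\clX||\clY|\gamma^{(\Xps,\Yps)}(k-1)+|\clY|\gamma^{(\Yps)}(k-1)$ using the alphabet-sum trick you describe, and then apply monotone convergence to get $\sum_k|f_k-g_k|<\infty$ a.s. The only cosmetic difference is that you split the $(\Xp,\Yp)$-part and the $\Yp$-part into two separate sums from the outset, whereas the paper keeps them bundled as $g_j=\log\bigl(P(X_j,Y_j|X_1^{j-1},Y_1^{j-1})/P(Y_j|Y_1^{j-1})\bigr)$; the arguments are otherwise identical.
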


\begin{proof}
Let,
$$g_{j} = \log\left(\frac{P(X_{j},Y_{j}|X_{1}^{j-1},Y_{1}^{j-1})}
{P(Y_{j}|Y_{1}^{j-1})}\right), \qquad j\geq 2,$$
and,
$$g_{1} = \log
\left(\frac{P(X_{1},Y_{1})}{P(Y_{1})}\right) = \log P(X_{1}|Y_{1}).$$
We have, for $k\geq 2$,
\begin{align*}
\mathbb{E}|f_{k}-g_{k}| 
\leq &
	\;\mathbb{E}|\log P(X_{k},Y_{k}|X_{-\infty}^{k-1},Y_{-\infty}^{k-1}) 
	- \log P(X_{k},Y_{k}|X_{1}^{k-1},Y_{1}^{k-1})|\\
&
	+ \mathbb{E}|\log P(Y_{k}|Y_{-\infty}^{k-1}) 
	- \log P(Y_{k}|Y_{1}^{k-1})| \\
\leq &
	\;\sum_{x,y}
	{\mathbb{E}\big|\log P(X_{k} = x,Y_{k} = y 
	|X_{-\infty}^{k-1},Y_{-\infty}^{k-1}) 
	- \log P(X_{k} = x,Y_{k} = y|X_{1}^{k-1},Y_{1}^{k-1})\big|} \\
&+ 
	\sum_{y}{\mathbb{E}\big|\log P(Y_{k} 
	= y|Y_{-\infty}^{k-1}) - \log P(Y_{k} = y|Y_{1}^{k-1})\big|}\\
\leq & 
	\;
	|\clX||\clY|\gamma^{(\Xps,\Yps)}(k-1) 
	+ |\clY|\gamma^{(\Yps)}(k-1).
\end{align*}
Therefore, $\sum_{k =1}^{\infty}{\mathbb{E}|f_{k} - g_{k}|} < \infty$,
and by the monotone convergence theorem we have,
$$\sum_{k=1}^{\infty}{|f_{k}-g_{k}|} < \infty, \qquad \mbox{a.s.}$$
Hence, as $n\to\infty$,
$$\left|\sum_{k = 1}^{n}{f_{k}} - \log{P(X_{1}^{n}|Y_{1}^{n})}\right| 
\leq \sum_{k=1}^{n}{|f_{k}-g_{k}|} = O(1), \qquad \mbox{a.s.},$$
as claimed.
\end{proof}

\medskip

The following bounds are established in the proof of
\cite[Theorem 9.1]{philipp-stout:book}:

\begin{lemma} \label{factsfromproof}
Suppose $\Zp=\{Z_n\;;\;n\in\IN\}$ is a stationary 
and ergodic process on 
a finite alphabet, with entropy rate $H(\Zp)$, and such that 
$\alpha^{(\Zps)}(d) = O(d^{-336})$
and 
$\gamma^{(\Zps)}(d)= O(d^{-48}),$ as $d\to\infty$.

Let $f^{(\Zps)}_{k} = \log{P(Z_{k}|Z_{-\infty}^{k-1})}$, $k\geq 0$,
and put $\eta^{(\Zps)}_{n} = f^{(\Zps)}_{n} + H(\Zp)$, $n\geq0$.
Then:
\begin{enumerate}

\item \label{9.1} For each $r > 0$,
$\mathbb{E}\big[\big|f_{0}^{(\Zps)}\big|^{r}\big] < \infty.$

\item \label{9.2} For each $r \geq 2$ and $\epsilon > 0$,
$$\mathbb{E}\big[\big |f_{0}^{(\Zps)} - \log{P(Z_{0}|Z_{-k}^{-1})}\big|^{r}\big]
\leq
C(r,\epsilon)(\gamma^{(\Zps)}(k))^{\frac{1}{2}-\epsilon},$$
where $C(r,\epsilon)$ is a constant depending only on r and $\epsilon$. 

\item \label{norm4bound} For a constant $C > 0$ independent
of $n$, $\|\eta^{(\Zps)}_{n}\|_{4} \leq C.$

\item \label{approx} Let $\eta^{(\Zps)}_{n\ell} 
= \mathbb{E}[\eta^{(\Zps)}_{n}|\mathcal{F}_{n-\ell}^{n}]$. Then,
as $\ell\to\infty$:
$$
\|\eta^{(\Zps)}_{n} - \eta^{(\Zps)}_{n\ell}\|_{4} = O(\ell^{-11/2}).
$$
\end{enumerate}
\end{lemma}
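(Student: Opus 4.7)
My strategy is to follow the Philipp--Stout blueprint, which for a stationary ergodic process on a finite alphabet $\clZ$ rests on three basic ingredients: (i) the function $p\mapsto p(-\log p)^r$ is bounded on $[0,1]$ for every $r>0$; (ii) the coefficient $\gamma^{(\Zps)}(k)$ directly controls, in $L^1$, the difference between the full and truncated conditional log-probabilities; and (iii) conditional expectation is an $L^p$-contraction for every $p\geq 1$.

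For part~1 I would write
\ban
\E\big[|f_0^{(\Zps)}|^r\big]
= \sum_{z\in\clZ}\E\Big[P(Z_0=z|Z_{-\infty}^{-1})
\big(-\log P(Z_0=z|Z_{-\infty}^{-1})\big)^r\Big],
\ean
which is a finite sum of expectations of quantities bounded almost surely by $\sup_{p\in[0,1]}p(-\log p)^r<\infty$. Part~3 follows at once: by stationarity and the triangle inequality, $\|\eta_n^{(\Zps)}\|_4\leq \|f_0^{(\Zps)}\|_4 + H(\Zp) \leq \|f_0^{(\Zps)}\|_4 + \log|\clZ|$, and $\|f_0^{(\Zps)}\|_4<\infty$ by part~1 with $r=4$.

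For part~2, set $X_k = f_0^{(\Zps)} - \log P(Z_0|Z_{-k}^{-1})$. Splitting by the value of $Z_0$ and invoking the definition of $\gamma^{(\Zps)}$ gives $\E|X_k|\leq |\clZ|\,\gamma^{(\Zps)}(k)$. The same finite-alphabet argument as in part~1, applied separately to $f_0^{(\Zps)}$ and to $\log P(Z_0|Z_{-k}^{-1})$, shows $\E|X_k|^{2r-1}\leq C_r$ uniformly in $k$. Cauchy--Schwarz then yields
\ban
\E|X_k|^r \leq \bigl(\E|X_k|^{2r-1}\bigr)^{1/2}\bigl(\E|X_k|\bigr)^{1/2}
\leq C_r' \bigl(\gamma^{(\Zps)}(k)\bigr)^{1/2},
\ean
which is in fact stronger than the stated $(\gamma^{(\Zps)}(k))^{1/2-\epsilon}$ rate.

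For part~4, observe that $\eta_n^{(\Zps)}-\eta_{n\ell}^{(\Zps)} = f_n^{(\Zps)} - \E[f_n^{(\Zps)}\,|\,\mathcal{F}_{n-\ell}^n]$, since the constant $H(\Zp)$ cancels. Because $\log P(Z_n|Z_{n-\ell}^{n-1})$ is $\mathcal{F}_{n-\ell}^n$-measurable, $L^4$-contraction of conditional expectation together with the triangle inequality give
\ban
\|\eta_n^{(\Zps)}-\eta_{n\ell}^{(\Zps)}\|_4
\leq 2\,\|f_n^{(\Zps)} - \log P(Z_n|Z_{n-\ell}^{n-1})\|_4,
\ean
which by stationarity equals $2\,\|f_0^{(\Zps)} - \log P(Z_0|Z_{-\ell}^{-1})\|_4$. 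Applying part~2 with $r=4$ and the hypothesis $\gamma^{(\Zps)}(\ell) = O(\ell^{-48})$ delivers a rate $O(\ell^{-48(1/2-\epsilon)/4}) = O(\ell^{-6+12\epsilon})$, which is $O(\ell^{-11/2})$ once $\epsilon\leq 1/24$. The main obstacle I anticipate is purely bookkeeping: carefully tracking constants through the Cauchy--Schwarz step in part~2 so that the exponent in part~4 is not weakened; no substantive idea beyond finite-alphabet integrability and the definition of $\gamma^{(\Zps)}$ is required.
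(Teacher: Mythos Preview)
The paper does not actually prove this lemma; it simply states that these bounds ``are established in the proof of \cite[Theorem~9.1]{philipp-stout:book}'' and cites the Philipp--Stout monograph. Your reconstruction is correct and, since you explicitly follow the Philipp--Stout blueprint, it presumably matches the cited argument; in fact your Cauchy--Schwarz step in part~2 yields the clean exponent $1/2$ rather than $1/2-\epsilon$, slightly stronger than what is stated.
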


Note that, under the assumptions of Theorem~\ref{philippstout}, the 
conclusions of Lemma~\ref{factsfromproof} apply to $\Yp$ as well as 
to the pair process $(\Xp,\Yp)$.

\begin{lemma} \label{lemma1}
For each $r>0$, we have,
$\mathbb{E}[|f_{0}|^{r}] < \infty.$
\end{lemma}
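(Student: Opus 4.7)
The strategy is straightforward: decompose $f_0$ as a difference of two log-probabilities and then reduce each term to a case already handled by Lemma~\ref{factsfromproof}~(\ref{9.1}).

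First I would write, directly from the definition~\eqref{fdef},
\[
f_0 \;=\; \log P(X_0,Y_0\mid X_{-\infty}^{-1},Y_{-\infty}^{-1}) \;-\; \log P(Y_0\mid Y_{-\infty}^{-1}) \;=\; f^{(\Xps,\Yps)}_0 \;-\; f^{(\Yps)}_0,
\]
using the notation of Lemma~\ref{factsfromproof} applied to the joint process $(\Xp,\Yp)$ (whose finite alphabet is $\clX\times\clY$) and to $\Yp$. The triangle inequality then gives $|f_0| \le |f^{(\Xps,\Yps)}_0| + |f^{(\Yps)}_0|$, and the elementary bound $(a+b)^r \le 2^{r}(a^{r}+b^{r})$ for $a,b\ge 0$ (valid for all $r>0$, with constant $\max(1,2^{r-1})$ if one wishes to be sharp) yields
\[
\mathbb{E}[|f_0|^r] \;\le\; 2^{r}\,\mathbb{E}\bigl[|f^{(\Xps,\Yps)}_0|^{r}\bigr] \;+\; 2^{r}\,\mathbb{E}\bigl[|f^{(\Yps)}_0|^{r}\bigr].
\]

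Next I would invoke Lemma~\ref{factsfromproof}~(\ref{9.1}) twice: once for the stationary ergodic pair process $(\Xp,\Yp)$, which by assumption~(M) satisfies the required mixing conditions (namely $\alpha^{(\Xps,\Yps)}(d)=O(d^{-336})$ and $\gamma^{(\Xps,\Yps)}(d)=O(d^{-48})$), and once for its marginal $\Yp$, which inherits the mixing bounds (the $\alpha$-coefficient is monotone under taking sub-$\sigma$-algebras, and $\gamma^{(\Yps)}(d)=O(d^{-48})$ is assumed in~\eqref{eq:mixing}). This gives $\mathbb{E}[|f^{(\Xps,\Yps)}_0|^{r}]<\infty$ and $\mathbb{E}[|f^{(\Yps)}_0|^{r}]<\infty$ for every $r>0$, and the claim follows.

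There is no real obstacle: the lemma is essentially a bookkeeping consequence of the fact that the cited moment bound in Lemma~\ref{factsfromproof} holds under the hypotheses of Theorem~\ref{philippstout} for both the joint process and its $\Yp$-marginal, as already remarked in the paragraph immediately preceding Lemma~\ref{lemma1}. The only thing worth being careful about is making sure the mixing hypotheses really do pass to the marginal $\Yp$, but this is immediate from monotonicity of $\alpha$-mixing and from~\eqref{eq:mixing}.
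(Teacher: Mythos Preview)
Your proof is correct and follows essentially the same approach as the paper: decompose $f_0 = f_0^{(\Xps,\Yps)} - f_0^{(\Yps)}$ and apply Lemma~\ref{factsfromproof}~(\ref{9.1}) to each piece. The only cosmetic difference is that the paper writes the combination via the $L^r$-norm triangle inequality $\|f_0\|_r \le \|f_0^{(\Xps,\Yps)}\|_r + \|f_0^{(\Yps)}\|_r$, whereas you use the pointwise bound $(a+b)^r \le 2^r(a^r+b^r)$; both work, and yours has the minor advantage of not needing a separate remark for $0<r<1$.
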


\begin{proof}
Simple algebra shows that,
$$
f_{0} = f_{0}^{(\Xps,\Yps)} - f_{0}^{(\Yps)}.$$
Therefore, by two applications of 
Lemma~\ref{factsfromproof}, part~\ref{9.1}, 
$$\|f_{0}\|_{r} \leq \|f_{0}^{(\Xps,\Yps)}\|_{r} 
+ \|f_{0}^{(\Yps)}\|_{r} < \infty.
$$
\end{proof}

The next bound follows
from Lemma~\ref{factsfromproof}, part~\ref{9.2}, 
upon applying the Minkowski inequality. 

\begin{lemma} \label{lemma2}
For each $r \geq 2$ and each $\epsilon > 0$,
$$\left\|f_{0} - \log\left(\frac{P(X_{0},Y_{0}|
X_{-k}^{-1},Y_{-k}^{-1})}{P(Y_{0}|Y_{-k}^{-1})}\right)\right\|_{r} 
\leq C_{1}(r,\epsilon)
[\gamma^{(\Xps,\Yps)}(k)]^{\frac{1-2\epsilon}{2r}} 
+ C_{2}(r,\epsilon)[\gamma^{(\Yps)}(k)]^{\frac{1-2\epsilon}{2r}}.
$$
\end{lemma}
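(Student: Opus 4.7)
The plan is to exploit the decomposition already used in Lemma~\ref{lemma1}. Writing $\log\bigl(P(X_0,Y_0|\cdot)/P(Y_0|\cdot)\bigr) = \log P(X_0,Y_0|\cdot) - \log P(Y_0|\cdot)$, we get
\begin{equation*}
f_0 = f_0^{(\Xps,\Yps)} - f_0^{(\Yps)},
\qquad
\log\!\left(\frac{P(X_0,Y_0|X_{-k}^{-1},Y_{-k}^{-1})}{P(Y_0|Y_{-k}^{-1})}\right)
= \log P(X_0,Y_0|X_{-k}^{-1},Y_{-k}^{-1}) - \log P(Y_0|Y_{-k}^{-1}),
\end{equation*}
where $f_0^{(\Xps,\Yps)}$ and $f_0^{(\Yps)}$ are the quantities from Lemma~\ref{factsfromproof} applied, respectively, to the pair process $(\Xp,\Yp)$ and to $\Yp$. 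Subtracting and collecting terms yields
\begin{equation*}
f_0 - \log\!\left(\frac{P(X_0,Y_0|X_{-k}^{-1},Y_{-k}^{-1})}{P(Y_0|Y_{-k}^{-1})}\right)
=
\bigl[f_0^{(\Xps,\Yps)} - \log P(X_0,Y_0|X_{-k}^{-1},Y_{-k}^{-1})\bigr]
- \bigl[f_0^{(\Yps)} - \log P(Y_0|Y_{-k}^{-1})\bigr].
\end{equation*}

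Next I would apply Minkowski's inequality in $L^r$ to split the right-hand side into the two pieces above, and then invoke Lemma~\ref{factsfromproof}, part~\ref{9.2}, once with the pair process $(\Xp,\Yp)$ and once with $\Yp$. Each application gives a bound on the $r$-th moment of the form $C(r,\epsilon)\,[\gamma^{(\cdot)}(k)]^{1/2 - \epsilon}$. Taking $r$-th roots to recover the $L^r$ norm produces the exponent $(1/2-\epsilon)/r = (1 - 2\epsilon)/(2r)$, which is precisely what appears in the statement. The constants $C_1(r,\epsilon)$ and $C_2(r,\epsilon)$ are then just $[C(r,\epsilon)]^{1/r}$ from the two separate applications.

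There is no real obstacle here: the decomposition is purely algebraic, Minkowski is immediate, and all the analytic content has already been absorbed into Lemma~\ref{factsfromproof}, which is applicable to $\Yp$ and to $(\Xp,\Yp)$ under assumption~(M) in view of the remark following Lemma~\ref{factsfromproof}.
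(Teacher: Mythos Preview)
Your proposal is correct and is exactly the argument the paper has in mind: the paper's proof is a single sentence stating that the bound follows from Lemma~\ref{factsfromproof}, part~\ref{9.2}, upon applying the Minkowski inequality, and you have spelled out precisely that derivation, including the passage from the $r$-th moment bound to the $L^r$ norm that produces the exponent $(1-2\epsilon)/(2r)$.
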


\begin{lemma} 
\label{sigmawellfinite}
As $N\to\infty$:
$$
\mathbb{E}\left\{\left[\sum_{k\leq N}{(f_{k}+h)}\right]^2\right\} 
= \sigma^{2}N + O(1).
$$
\end{lemma}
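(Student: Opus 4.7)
The plan is a standard computation of the variance of a partial sum of a stationary centred sequence, combined with a decay estimate on the autocovariances that uses both the mixing condition on $\alpha^{(\Xps,\Yps)}(d)$ and the approximation inequality from Lemma~\ref{lemma2}.

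First, observe that $\mathbb{E}[f_0+h]=0$: indeed, by stationarity and the definition of conditional entropy rates,
$\mathbb{E}[f_0] = -H(\Xp,\Yp)+H(\Yp)=-h$. Expanding the square and using stationarity to write every cross term as $\gamma(k):=\mathbb{E}[(f_0+h)(f_k+h)]$ with lag $k=|i-j|$, we obtain
\begin{equation*}
\mathbb{E}\left\{\left[\sum_{k\leq N}(f_k+h)\right]^2\right\}
=N\gamma(0)+2\sum_{k=1}^{N-1}(N-k)\gamma(k)
=N\left[\gamma(0)+2\sum_{k=1}^\infty\gamma(k)\right]
-2\sum_{k=1}^{N-1}k\gamma(k)-2N\sum_{k\geq N}\gamma(k).
\end{equation*}
So the result will follow once I show that $\sum_{k\geq 1}k|\gamma(k)|<\infty$, since then $\sum_{k=1}^\infty\gamma(k)$ exists, equals the quantity appearing in~(\ref{eq:newvar}) (so the bracket is $\sigma^2$), and both remainder sums are $O(1)$ (the last one via $N\sum_{k\geq N}|\gamma(k)|\leq\sum_{k\geq N}k|\gamma(k)|=o(1)$).

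The main step is therefore the tail estimate on $\gamma(k)$. Set $\ell=\lfloor k/2\rfloor$ and let
\[
f_{k,\ell}=\log\!\left(\frac{P(X_k,Y_k|X_{k-\ell}^{k-1},Y_{k-\ell}^{k-1})}{P(Y_k|Y_{k-\ell}^{k-1})}\right),
\]
which is $\mathcal{F}_{k-\ell}^{k}$-measurable. I would then split
\[
\gamma(k)=\mathbb{E}\bigl[(f_0+h)(f_{k,\ell}+h)\bigr]+\mathbb{E}\bigl[(f_0+h)(f_k-f_{k,\ell})\bigr].
\]
For the first piece, $(f_0+h)\in\mathcal{F}_{-\infty}^{0}$ and $(f_{k,\ell}+h)\in\mathcal{F}_{k-\ell}^{\infty}$, so Davydov's covariance inequality (together with $\mathbb{E}[f_0+h]=0$) gives
\[
\bigl|\mathbb{E}[(f_0+h)(f_{k,\ell}+h)]\bigr|\leq C\,\|f_0+h\|_4\,\|f_{k,\ell}+h\|_4\,\bigl(\alpha^{(\Xps,\Yps)}(\lceil k/2\rceil)\bigr)^{1/2},
\]
where the $L^4$ norms are uniformly bounded in $k$ (by Lemma~\ref{lemma1} applied to $f_0$ and by stationarity plus a conditional-expectation argument for $f_{k,\ell}$, analogous to Lemma~\ref{factsfromproof}~\ref{norm4bound}). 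Under assumption~(M), $\alpha^{(\Xps,\Yps)}(d)=O(d^{-336})$, so this contribution is $O(k^{-168})$. For the second piece, Cauchy--Schwarz and stationarity give
\[
\bigl|\mathbb{E}[(f_0+h)(f_k-f_{k,\ell})]\bigr|\leq\|f_0+h\|_2\,\|f_0-f_{0,\ell}\|_2,
\]
which, by Lemma~\ref{lemma2} with $r=2$ and a small $\epsilon>0$, is of order $[\gamma^{(\Xps,\Yps)}(\ell)]^{(1-2\epsilon)/4}+[\gamma^{(\Yps)}(\ell)]^{(1-2\epsilon)/4}=O(k^{-12+\delta})$ for arbitrarily small $\delta>0$.

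Combining the two bounds yields $|\gamma(k)|=O(k^{-12+\delta})$, which is more than enough to make $\sum_k k|\gamma(k)|$ converge; plugging back into the display above gives the claimed expansion with $\sigma^2$ as in~(\ref{eq:newvar}). The only delicate point is verifying the uniform $L^4$ bound on $f_{k,\ell}+h$ needed to apply Davydov; I would obtain it by writing $f_{k,\ell}+h=\mathbb{E}[f_k+h\mid\mathcal{F}_{k-\ell}^{k}]$ and applying the conditional Jensen inequality to $\|\cdot\|_4$, reducing it to the $L^4$ bound on $f_0+h$ that follows from Lemma~\ref{lemma1}.
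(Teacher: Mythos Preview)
Your approach is essentially the same as the paper's: the same stationarity expansion, the same split of the lag-$k$ autocovariance into a ``truncation error'' piece handled by Cauchy--Schwarz and Lemma~\ref{lemma2}, and an ``$\alpha$-mixing'' piece handled by a covariance inequality (you use Davydov with $L^4$ norms and exponent $1/2$; the paper uses \cite[Lemma~7.2.1]{philipp-stout:book} with $L^3$ norms and exponent $1/3$ --- both work equally well here).

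There is, however, one genuine slip in your final paragraph. The identity $f_{k,\ell}+h=\mathbb{E}[f_k+h\mid\mathcal{F}_{k-\ell}^k]$ is \emph{false}: $\log P(X_k,Y_k\mid X_{k-\ell}^{k-1},Y_{k-\ell}^{k-1})$ is \emph{not} the conditional expectation of $\log P(X_k,Y_k\mid X_{-\infty}^{k-1},Y_{-\infty}^{k-1})$ given $\mathcal{F}_{k-\ell}^{k}$, because $\log$ does not commute with conditional expectation (Jensen goes the wrong way and is strict in general). So your conditional-Jensen argument for the uniform $L^4$ bound on $f_{k,\ell}+h$ does not go through. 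The fix is easy and is the one the paper implicitly uses: by stationarity $\|f_{k,\ell}\|_4$ equals the $L^4$ norm of $\log P(X_0,Y_0\mid X_{-\ell}^{-1},Y_{-\ell}^{-1})-\log P(Y_0\mid Y_{-\ell}^{-1})$, and each summand is bounded uniformly in $\ell$ by combining parts~\ref{9.1} and~\ref{9.2} of Lemma~\ref{factsfromproof} (triangle inequality: $\|\log P(Z_0\mid Z_{-\ell}^{-1})\|_4\le\|f_0^{(\Zps)}\|_4+\|f_0^{(\Zps)}-\log P(Z_0\mid Z_{-\ell}^{-1})\|_4$), applied once with $\Zp=(\Xp,\Yp)$ and once with $\Zp=\Yp$. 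With that correction the rest of your argument stands.
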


\begin{proof}
First we examine the definition of the variance $\sigma^2$.
The first term in~(\ref{eq:newvar}),
$$\|f_0+h\|_2^2\leq (\|f_0\|_2+h)^2<\infty,$$
is finite by Lemma~\ref{lemma1}.
For the series in~(\ref{eq:newvar}), let, for $k\geq 0$, 
$$
\phi _{k} = 
\log
\left(\frac{P(X_{k},Y_{k}|X_{\lfloor k/2\rfloor}^{k-1},
Y_{\lfloor k/2\rfloor}^{k-1})}{P(Y_{k}
|Y_{\lfloor k/2\rfloor}^{k-1})}\right),
$$
and write,
\be
\mathbb{E}(f_{0} + h)(f_{k} + h) = 
\mathbb{E}(f_{0}+h)(f_{k}-\phi_{k}) + \mathbb{E}(f_{0}+h)(\phi_{k} + h).
\label{eq:triangle}
\ee
For the first term in the right-hand side above, we can bound, 
for any $\epsilon>0$,
\begin{align*}
|\mathbb{E}(f_{0}+h)(f_{k}-\phi_{k})|
&\leqa \|f_{0}+h\|_{2}\|f_{k}-\phi_{k}\|_{2} \\
&\leq [\|f_{0}\|_2+h]\|f_{k}-\phi_{k}\|_{2} \\
& \leqb 
	A C_1(2,\epsilon)\Big[
	\gamma^{(\Xps,\Yps)}(\lfloor k/2\rfloor)
	\Big]^{\frac{1}{4}- \frac{1}{2}\epsilon} 
	+A C_2(2,\epsilon)\Big[
	\gamma^{(\Yps)}(\lfloor k/2\rfloor)
	\Big]^{\frac{1}{4}- \frac{1}{2}\epsilon},
\end{align*}
where $(a)$ follows by the Cauchy-Schwarz inequality, 
and $(b)$ follows by Lemmas~\ref{lemma1}, and~\ref{lemma2},
with $A=\|f_{0}\|_2+h<\infty$. Therefore, taking
$\epsilon>0$ small enough and using the assumptions
of Theorem~\ref{philippstout},
\be
|\mathbb{E}(f_{0}+h)(f_{k}-\phi_{k})|
= O(k^{-12+24\epsilon}) = O(k^{-3}), \qquad
\mbox{as}\;k\to\infty.
\label{eq:term1}
\ee
For the second term in~(\ref{eq:triangle}),
we have that, for any $r>0$,
$\|\phi_k\|_r<\infty$,
uniformly over $k\geq 1$ by stationarity.
Also, since $f_{0}, \phi_{k}$ are measurable with respect 
to the $\sigma$-algebras generated by $(X_{-\infty}^{0},Y_{-\infty}^{0})$ 
and $(X_{\lfloor k/2\rfloor}^{\infty},
Y_{\lfloor k/2\rfloor}^{\infty})$, respectively, we can
apply \cite[Lemma~7.2.1]{philipp-stout:book} with $p=r=s=3$,
to obtain that, 
$$|\mathbb{E}(f_{0} + h)(\phi_{k} + h)| \leq
10 \|f_{0}+h\|_{3}\|\phi_{k} + h\|_{3}\alpha(\lfloor k/2\rfloor)^{1/3},
$$
where $\alpha(k) = \alpha^{(\Xps,\Yps)}(k) 
= O(k^{-48})$, as $k\to\infty$, by assumption. 
Therefore, a fortiori,
$$\mathbb{E}(f_{0} + h)(f_{k} + h) = O(k^{-3}),$$
and combining this with~(\ref{eq:term1})
and substituting in~(\ref{eq:triangle}),
implies that $\sigma^{2}$ in~(\ref{eq:newvar})
is well defined and finite. 

Finally, we have that, as $N\to\infty$,
\begin{align*}
\mathbb{E}\left\{\left[
\sum_{k\leq N}{(f_{k}+h)}\right]^2\right\}
&= N\mathbb{E}(f_{0} + h)^{2} + 2\sum_{k=0}^{N-1}{(N-k)\mathbb{E}(f_{0}+h)(f_{k}+h)} \\
&= N\sigma^{2} - 2\sum_{k=1}^{N-1}{k\mathbb{E}(f_{0}+h)(f_{k}+h)} - 2N\sum_{k=N}^{\infty}{\mathbb{E}(f_{0}+h)(f_{k}+h)} \\&
= \sigma^{2}N + O(1),
\end{align*}
as required.
\end{proof}

\newpage


\noindent{\bf Proof of Lemma~\ref{lem:varentropy}. }
Lemma~\ref{sigmawellfinite} states that the limit,
\begin{equation}
\label{eq:varA}
\lim_{n \rightarrow \infty}{\frac{1}{n}
\VAR{\left(-\log\Big(\frac{P(X_1^n,Y_1^n|X_{-\infty}^0,Y_{-\infty}^0)}
{P(Y_1^n|Y_{-\infty}^0)}\Big)\right)}}.
\end{equation}
exists and is finite.
Moreover, by Lemma~\ref{lemma2},
after an application of the Cauchy-Schwarz inequality,
we have that, as $n\to\infty$,
\begin{equation*}
\mathbb{E}
\left\{\left[
\sum_{k\leq n}
\left| 
\log\left(\frac{P(X_{k},Y_{k}|X_1^{k-1},Y_1^{k-1})}{P(Y_{k}|Y_1^{k-1})}\right)
- \log\left(\frac{P(X_k,Y_k|X_{-\infty}^{k-1},Y_{-\infty}^{k-1})}
{P(Y_k|Y_{-\infty}^{k-1})}\right)
\right|\right]^2\right\} = O(1),
\end{equation*}
therefore,
\begin{equation*}
\frac{1}{n}\Biggl\{\VAR{\Biggl(-\log{P(X_1^n|Y_1^n)}\Biggr)}
-\VAR{\Biggl(-\log\Big(\frac{P(X_1^n,Y_1^n|X_{-\infty}^0,Y_{-\infty}^0)}
{P(Y_1^n|Y_{-\infty}^0)}\Big)\Biggr)}\Biggr\} = o(1).
\end{equation*}
Combining this with~(\ref{eq:varA}) and the definition of $\sigma^2$,
completes the proof.
\qed

\medskip

\noindent
{\bf Proof of Theorem \ref{philippstout}. }
Note that we have already established the fact
that the expression for the variance 
converges to some $\sigma^2<\infty$.
Also, in view of Lemma \ref{fsufficient}, it is 
sufficient to prove the theorem for $\{\tilde{S}(t)\}$
instead of $\{S(t)\}$, where:
$$\tilde{S}(t) = \sum_{k \leq t}{(f_{k} + h)},
\qquad t\geq 0.
$$
This will be established by an application of
\cite[Theorem~7.1]{philipp-stout:book},
once we verify that conditions~(7.1.4), 
(7.1.5), (7.1.6), (7.1.7) and~(7.1.9) there
are all satisfied.

For each $n\geq 0$, let $\eta_{n} = f_{n} + h$,
where $f_n$ is defined in~(\ref{fdef}) and 
$h$ is the conditional entropy rate. 
First we observe that, by stationarity,
\begin{align} 
\mathbb{E}[\eta_{n}]
&= \mathbb{E}\left[\log\left(\frac{P(X_{n},Y_{n}|X_{-\infty}^{n-1},
	Y_{-\infty}^{n-1})}{P(Y_{n}|Y_{-\infty}^{n-1})}\right)\right]
	 + H(\Xp|\Yp)\nonumber\\ 
&= \mathbb{E}\big[\log{P(X_{0},Y_{0}|X_{-\infty}^{-1},Y_{-\infty}^{-1})}\big]
	 + H(\Xp,\Yp) - \mathbb{E}\big[\log{P(Y_{0}|Y_{-\infty}^{-1})}\big]
	 -H(\Yp)\nonumber\\ \label{7.1.4}
          &= 0,
\end{align}
where 
$H(\Xp,\Yp)$ and $H(\Yp)$ denote the entropy rates 
of $(\Xp,\Yp)$ and $\Yp$, respectively \cite{cover:book2}. 
Observe that, in the notation of Lemma~\ref{factsfromproof},
$\eta_{n} = \eta_{n}^{(\Xps,\Yps)} - \eta_{n}^{(\Yps)}$,
and $\eta_{n\ell} = \eta_{n\ell}^{(\Xps,\Yps)} 
- \eta_{n\ell}^{(\Yps)}.$
By Lemma~\ref{factsfromproof}, parts~\ref{norm4bound} and~\ref{approx}, 
there exist a constant $C$, independent of $n$ such that,
\begin{equation} \label{7.1.5}
\|\eta_{n}\|_{4} \leq C<\infty,
\end{equation}
and,
\begin{equation} \label{7.1.6}
\|\eta_{n} - \eta_{n\ell}\|_{4} = O(\ell^{-11/2}).
\end{equation}
And from Lemma~\ref{sigmawellfinite} we have,
\begin{equation} \label{7.1.7}
\mathbb{E}\left\{\Big(\sum_{n \leq N}{\frac{1}{\sigma}\eta_{n}} \Big)^{2}
\right\} = N + O(1).
\end{equation}

From~(\ref{7.1.4})--(\ref{7.1.7}) and the 
assumption that $\alpha^{(\Xps,\Yps)}(d) = O(d^{-336})$,
we have that all of the conditions~(7.1.4), 
(7.1.5), (7.1.6), (7.1.7) and (7.1.9) of 
\cite[Theorem~7.1]{philipp-stout:book} are satisfied for the 
random variables $\{\eta_{n}/\sigma\}$, with $\delta = 2$.
Therefore, $\{\tilde{S}(t)\;;\;t\geq 0\}$ can be redefined
on a possibly richer probability space, where there 
exists a standard Brownian motion $\{B(t)\;;\; t \geq 0\}$,
such that, as $t\to\infty$:
$$\frac{1}{\sigma}\tilde{S}(t) - B(t) = O(t^{1/2-\lambda}), \qquad \mbox{a.s.}$$
By Lemma \ref{fsufficient}, this completes the proof.
\qed

\section{Recurrence Times Proofs}
\label{recurrence}

In this appendix we provide the proofs of some of the more
technical results in Section~\ref{lempelzivsection}.
First we establish the following generalisation of
\cite[Lemma~16.8.3]{cover:book2}.

\begin{lemma}
\label{lemmacover}
Suppose $(\Xp,\Yp)$ is an arbitrary source-side information
pair. Then, for any sequence
$\{t_n\}$ of non-negative 
real numbers such that $\sum_n{2^{-t_n}}<\infty$,
we have:
\begin{equation*}
\log{\frac{P(Y_{1}^{n}|Y_{-\infty}^{0}, X_{-\infty}^{0})}
{P(Y_{1}^{n}|Y_{-\infty}^{0})}} \geq -t_n, \qquad \text{eventually a.s.}
\end{equation*}
\end{lemma}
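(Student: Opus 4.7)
The plan is to adapt the likelihood-ratio argument underlying \cite[Lemma~16.8.3]{cover:book2} to the present conditional setting, and then to close via the Borel--Cantelli lemma. For each $n\geq 1$ I would introduce the ``bad'' event $A_n$ as the complement of the claimed inequality,
\[
A_n = \left\{ \frac{P(Y_1^n \mid Y_{-\infty}^0, X_{-\infty}^0)}{P(Y_1^n \mid Y_{-\infty}^0)} < 2^{-t_n} \right\},
\]
so that the lemma is equivalent to $\PP(A_n \text{ i.o.}) = 0$. The key step is to establish the stronger pointwise bound $\PP(A_n) \leq 2^{-t_n}$, after which the summability assumption $\sum_n 2^{-t_n} < \infty$ immediately yields the conclusion by Borel--Cantelli.

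To prove the pointwise bound I would condition on the joint infinite past $(Y_{-\infty}^0, X_{-\infty}^0)$, noting that the conditional distribution of $Y_1^n$ given this past is exactly $P(\cdot \mid Y_{-\infty}^0, X_{-\infty}^0)$, and write
\[
\PP\bigl(A_n \mid Y_{-\infty}^0, X_{-\infty}^0\bigr) = \sum_{y_1^n \in \clY^n} P(y_1^n \mid Y_{-\infty}^0, X_{-\infty}^0)\, \mathbb{I}\bigl\{ P(y_1^n \mid Y_{-\infty}^0, X_{-\infty}^0) < 2^{-t_n} P(y_1^n \mid Y_{-\infty}^0) \bigr\}.
\]
On the support of the indicator the summand is bounded above by $2^{-t_n} P(y_1^n \mid Y_{-\infty}^0)$, so dropping the indicator and extending the sum over all of $\clY^n$ gives $\PP(A_n \mid Y_{-\infty}^0, X_{-\infty}^0) \leq 2^{-t_n} \sum_{y_1^n} P(y_1^n \mid Y_{-\infty}^0) = 2^{-t_n}$. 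Taking expectations over the past then yields $\PP(A_n) \leq 2^{-t_n}$, as required.

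I do not expect any serious obstacle, because the argument is essentially a one-line Markov-type inequality applied to the likelihood ratio between the two conditional PMFs of $Y_1^n$ (the expected value of this ratio is $1$ under the finer conditioning, by the tower property). The only items deserving minor care are the measurability of the regular conditional distributions involved in the infinite-past conditioning, which is routine because $\clY$ is finite, and the convention for the ratio when the denominator vanishes; the latter is vacuous, since whenever $P(y_1^n \mid Y_{-\infty}^0, X_{-\infty}^0) > 0$ for the realised $Y_1^n$, the coarser conditional probability $P(y_1^n \mid Y_{-\infty}^0)$ is also strictly positive by the tower property.
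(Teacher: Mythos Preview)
Your proposal is correct and follows essentially the same approach as the paper: both condition on the infinite past $(Y_{-\infty}^0,X_{-\infty}^0)$, exploit the likelihood-ratio structure to obtain $\PP(A_n)\leq 2^{-t_n}$, and conclude via Borel--Cantelli. The only cosmetic difference is that the paper phrases the key step as Markov's inequality applied to the reciprocal ratio (after showing $\BBE[P(Y_1^n|Y_{-\infty}^0)/P(Y_1^n|Y_{-\infty}^0,X_{-\infty}^0)]\leq 1$), whereas you bound the conditional probability directly by replacing the summand on the indicator's support; these are two presentations of the same change-of-measure argument.
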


\begin{proof}
Let $B(X_{-\infty}^{0},Y_{-\infty}^{0}) \subset \mathcal{Y}^{n}$ 
denote the support of $P(\cdot|X_{-\infty}^{0}, Y_{-\infty}^{0})$. 
We can compute,
\begin{align*}
\EBig{ 
\frac
{P(Y_{1}^{n}|Y_{-\infty}^{0})}
{P(Y_{1}^{n}|Y_{-\infty}^{0}, X_{-\infty}^{0})}
} 
&= 
\BBE\left(
\EBig{ 
\frac
{P(Y_{1}^{n}|Y_{-\infty}^{0})}
{P(Y_{1}^{n}|Y_{-\infty}^{0}, X_{-\infty}^{0})} 
\Big|Y_{-\infty}^{0}, X_{- \infty }^{0}
}
\right)\\
&=  
\BBE\left(
\sum_{y_{1}^{n} \in B(X_{-\infty}^{0},Y_{-\infty}^{0})}
{ 
\frac
{P(y_{1}^{n}|Y_{-\infty}^{0})}
{P(y_{1}^{n}|Y_{-\infty}^{0}, X_{-\infty}^{0})}
}
P(y_{1}^{n}|Y_{-\infty}^{0}, X_{-\infty}^{0}) 
\right) \\ 
&\leq 1.
\end{align*}
By Markov's inequality,  
\begin{equation*}
\BBP\left[
\log{\Big(\frac{P(Y_{1}^{n}|Y_{-\infty}^{0})}{P(Y_{1}^{n}
|Y_{-\infty}^{0}, X_{-\infty}^{0})} \Big)} > t_n\right]
= \PBig{ \frac{P(Y_{1}^{n}|Y_{-\infty}^{0})}{P(Y_{1}^{n}
|Y_{-\infty}^{0}, X_{-\infty}^{0})} > 2^{t_n}}
\leq 2^{-t_n},
\end{equation*}
and so, by the Borel-Cantelli lemma,
\begin{equation*}
\log{\frac{P(Y_{1}^{n}|Y_{-\infty}^{0})}{P(Y_{1}^{n}|
Y_{-\infty}^{0}, X_{-\infty}^{0})} } \leq t_n, 
\qquad \text{eventually a.s.,}
\end{equation*}
as claimed.
\end{proof}

\medskip

\noindent
{\bf Proof of Theorem~\ref{strongapproxtheorem}. }
Let $K>0$ arbitrary. By Markov's inequality and Kac's theorem,
 \begin{align*}
 \PP(\clR_{n}(\Xp|\Yp) > K \bigm|X_{1}^{n} = x_{1}^{n}, Y_{1}^{n} = y_{1}^{n})
 &\leq \frac{\EBig{\clR_{n}(\Xp|\Yp) \bigm| X_{1}^{n} = x_{1}^{n}, Y_{1}^{n} 
 = y_{1}^{n} }}{K} \\ &= \frac{1}{KP(x_{1}^{n}|y_{1}^{n})}.
 \end{align*}
Taking $K =2^{c_n}/P(X_{1}^{n}|Y_{1}^{n})$, we obtain,
 \begin{align*} 
 &\PP\bigl(\log[\clR_{n}(\Xp|\Yp)P(X_{1}^{n} \bigm|Y_{1}^{n})] > c_n \bigm|X_{1}^{n} = x_{1}^{n}, Y_{1}^{n} = y_{1}^{n}\bigr) \\ 
&= \BBP\Big(\clR_{n}(\Xp|\Yp) > \frac{2^{c_n}}{P(X_{1}^{n} |Y_{1}^{n})}\Big|X_{1}^{n} = x_{1}^{n}, Y_{1}^{n} = y_{1}^{n}\Big) \\
 &\leq 2^{-c_n}.
 \end{align*}
 Averaging over all 
	$x_{1}^{n} \in \mathcal{X}^{n},
	y_{1}^{n} \in \mathcal{Y}^{n}$,
 \begin{equation*}
 \PP\bigl(\log{\clR_{n}(\Xp|\Yp)P(X_{1}^{n} |Y_{1}^{n})} > c_n) \leq 2^{-c_n},
 \end{equation*}
 and the Borel-Cantelli lemma gives~$(i)$.
 
For~$(ii)$ we first note that the probability,
\begin{equation}
\PP\bigl(\log[\clR_{n}(\Xp|\Yp)P(X_{1}^{n}|Y_{1}^{n},
X_{-\infty}^{0},Y_{-\infty}^{0})]
< -c_n \bigm| Y_{1}^{n} = y_{1}^{n}, X_{-\infty}^{0} = x_{-\infty}^{0},Y_{-\infty}^{0} = y_{-\infty}^{0}\bigr) 
\label{eq:pii}
\end{equation}
is the probability,
under
$P(X_1^n=\cdot|
Y_1^n=y_1^n,X_{-\infty}^0=x_{-\infty}^0,Y_{-\infty}^0=y_{-\infty}^0 )$,
of those 
$z_{1}^{n}$
such that,
\ben
P(X_{1}^{n} = z_{1}^{n}|X_{-\infty}^{0},Y_{-\infty}^{n})
< \frac{2^{-c_n}}{\clR_{n}(x_{-\infty}^{0}\ast z_{1}^{n}|y_{-\infty}^{n})},
\een
where `$*$' denotes the concatenation of strings.
Let $G_{n} = G_{n}(x_{-\infty}^{0},y_{-\infty}^{n})\subset\clX^n$
denote the set of all such $z_1^n$.
Then the probability in~(\ref{eq:pii}) is,
\begin{equation*}
\sum_{z_{n} \in G_{n}}{P(z_{1}^{n}|x_{-\infty}^{0},y_{-\infty}^{n})} 
\leq \sum_{z_{n} \in G_{n}} {\frac{2^{-c_n}}
{\clR_{n}(x_{-\infty}^{0}\ast z_{1}^{n}|y_{-\infty}^{n})}} 
\leq 2^{-c_n}\sum_{z_{n} \in \mathcal{X}^{n}} 
{\frac{1}{\clR_{n}(x_{-\infty}^{0}\ast z_{1}^{n}|y_{-\infty}^{n})}}.
\end{equation*}
Since both $x_{-\infty}^{0}$ and $y_{-\infty}^{n}$ are fixed, 
for each $j \geq 1$, there is exactly one $z_{1}^{n} \in \mathcal{X}^{n}$, 
such that $\clR_{n}(x_{-\infty}^{0}\ast z_{1}^{n}|y_{-\infty}^{n}) = j$. 
Thus, the last sum is bound above by,
$$\sum_{j = 1}^{|\mathcal{X}|^{n}}{\frac{1}{j}} \leq Dn,$$ 
for some positive constant $D$.
Therefore, 
the probability in~(\ref{eq:pii}) is bounded above by 
$Dn2^{-c_n}$,
which is independent of $x_{-\infty}^{0},y_{-\infty}^{n}$ and, 
by assumption, summable over $n$. Hence, after averaging over all 
infinite sequences $x_{-\infty}^{0}, y_{-\infty}^{n}$, 
the Borel-Cantelli lemma gives~$(ii)$.

For part~$(iii)$ we have,
eventually, almost surely,
\begin{align*}
\log
&
	\left[
	\clR_{n}(\Xp|\Yp)\frac{P(X_1^n,Y_1^n|Y_{-\infty}^0,X_{-\infty}^0)}
	{P(Y_1^n|Y_{-\infty}^0)}\right]\\
&
	= \log
	\left[
	\clR_{n}(\Xp|\Yp) 
	\frac{P(X_1^n|Y_1^n,X_{-\infty}^{0},Y_{-\infty}^0)P(Y_1^n
	|X_{-\infty}^0,Y_{-\infty}^0)}{P(Y_{1}^{n}|Y_{-\infty}^{0})} 
	\right] \\
&=  
	\log
	[\clR_{n}(\Xp|\Yp)P(X_{1}^{n}|Y_{1}^{n},X_{-\infty}^{0},Y_{-\infty}^{0})]
	+ \log
	\left[\frac{P(Y_{1}^{n}|X_{-\infty}^{0},Y_{-\infty}^{0})}
	{P(Y_{1}^{n}|Y_{-\infty}^{0})}
	\right] \;\geq\;
	-2c_n, 
\end{align*}
where the last inequality follows 
from~$(ii)$ and Lemma~\ref{lemmacover},
and we have shown~$(iii)$.
\qed

\medskip

\noindent
{\bf Proof of Corollary~\ref{strongapproxcorollary}. } 
If we take $c_n = \epsilon n^{\beta}$ in theorem \ref{strongapproxtheorem},
with $\epsilon>0$ arbitrary,
we get from~$(i)$ and~$(iii)$,
\begin{align}
\label{upper}
&\limsup_{n\to\infty}{\frac{1}{n^{\beta}}
\log[\clR_{n}(\Xp|\Yp)P(X_{1}^{n}|Y_{1}^{n})}] \leq 0,
\qquad \mbox{a.s.}\\
\label{lower}
\mbox{and}
\qquad
&\liminf_{n\to\infty}
\frac{1}{n^{\beta}}
\log
\left[\clR_{n}(\Xp|\Yp)\frac{P(X_{1}^{n},Y_{1}^{n}|
X_{-\infty}^{0}, Y_{-\infty}^{0})}{P(Y_{1}^{n}|Y_{-\infty}^{0})}\right]
\geq 0, \qquad 
\mbox{a.s.}
\end{align}
Hence, to prove~$(a)$ it is sufficient to show that,
as $n\to\infty$,
\begin{equation*}
\log{P(X_{1}^{n}|Y_{1}^{n})} - 
\log
\left[\frac{P(X_{1}^{n},Y_{1}^{n}|X_{-\infty}^{0}, Y_{-\infty}^{0})}
{P(Y_{1}^{n}|Y_{-\infty}^{0})}\right]
= O(1), \qquad \mbox{a.s.},
\end{equation*}
which is exactly Lemma~\ref{fsufficient} in Appendix~\ref{philippstoutproof}.

To prove~$(b)$, taking $\beta = 1$ in~\eqref{upper} 
and~\eqref{lower}, it suffices to show that,
\begin{equation*}
\lim_{n\to\infty}
\left\{\frac{1}{n}\log{P(X_{1}^{n}|Y_{1}^{n})} 
- \frac{1}{n}\log\Big(\frac{P(X_1^n,Y_1^{n}|X_{-\infty}^0,Y_{-\infty}^0)}
{P(Y_{1}^{n}|Y_{-\infty}^{0})}\Big)\right\} = 0, \qquad \mbox{a.s.}
\end{equation*}
But the first term converges almost surely to $-H(\Xp|\Yp)$ 
by the Shannon-McMillan-Breiman theorem, as 
in~(\ref{conditionalmcmillan}),
and the 
second term is,
$$-\frac{1}{n}
\sum_{i = 1}^{n}{\log{P(X_{i},Y_{i}|X_{-\infty}^{i-1},Y_{-\infty}^{i-1})}} 
+ \frac{1}{n}\sum_{i = 1}^{n}{\log{P(Y_{i}|Y_{-\infty}^{i-1})}},$$
which, by the ergodic theorem,
converges almost surely to,
\begin{equation*}
-\BBE[\log{P(X_{0},Y_{0}|X_{-\infty}^{0},Y_{-\infty}^{0})}]
+ \BBE[\log{P(Y_{0}|Y_{-\infty}^{0})}]
= H(\Xp,\Yp) - H(\Yp) 
= 
H(\Xp|\Yp).
\end{equation*}
This completes the proof. 
\qed

\end{appendices}

\bibliographystyle{plain}
\def\cprime{$'$}


\end{document}